%% file: main.tex
\documentclass[11pt,a4paper,twoside]{article}
\usepackage[T1]{fontenc}
\usepackage[margin=2.8cm,headheight=14pt,headsep=18pt]{geometry}
\usepackage{amsmath,amssymb,amsthm,mathtools}
\usepackage{graphicx,booktabs,array,microtype,fancyhdr}
\usepackage[font=small,labelfont=bf,hypcap=false]{caption}
\usepackage{enumitem}
\usepackage{xurl}
\usepackage[hidelinks]{hyperref}
\hypersetup{pdftitle={Market Completeness and Optional Projections under Restricted Information},pdfauthor={Levin David Schwab},pdfcreator={},pdfproducer={},pdfsubject={Finite discrete-time markets, optional projections, and completeness},pdfkeywords={}}
\setlist{nosep,leftmargin=1.5em}
\theoremstyle{plain}
\newtheorem{proposition}{Proposition}[section]
\newtheorem{theorem}[proposition]{Theorem}
\newtheorem{lemma}[proposition]{Lemma}
\newtheorem{corollary}[proposition]{Corollary}
\theoremstyle{definition}
\newtheorem{definition}[proposition]{Definition}

\newtheorem{example}[proposition]{Example}
\numberwithin{equation}{section}
\newcommand{\E}{\mathbb E}
\newcommand{\R}{\mathbb R}
\newcommand{\F}{\mathcal F}
\newcommand{\G}{\mathcal G}
\newcommand{\M}{\mathcal M}
\newcommand{\N}{\mathcal N}
\newcommand{\ind}{\mathbf 1}
\newcommand{\sbar}{\overline S}
\newcommand{\A}{\mathcal A}
\begin{document}
\raggedbottom
\begin{center}
{\Large\bfseries Market Completeness and Optional Projections\\[2pt] under Restricted Information\par}
\vspace{12pt}
{\large Levin David Schwab\par}
\vspace{4pt}
22 September 2026
\end{center}
\thispagestyle{plain}
\vspace{3pt}
\begin{abstract}
\noindent In a finite discrete-time market, trading decisions may be predictable with respect to a filtration that does not adapt asset prices. The first fundamental theorem then characterizes absence of arbitrage by measures under which the optional projection of discounted prices is a martingale. We examine the corresponding completeness question. For a fixed projection, every claim measurable with respect to its terminal price history is attainable precisely when all equivalent martingale measures of that fixed process agree on the claim sigma-field. We give the finite-dimensional proof, retaining the distinction between the trading filtration and the claim sigma-field. If the projection is common to all optional martingale measures of the original prices, projected completeness implies uniqueness of their restrictions, but the converse fails even in a three-state model with a unique optional martingale measure. For the binomial model under its product martingale measure, a delay of $k$ periods yields a complete projected market with effective horizon $(T-k)^+$. An explicit replication construction and finite examples distinguish this completeness from replication at the original prices.
\end{abstract}

\section{Introduction}
When an investor chooses positions using delayed or restricted information, the discounted price process need not be adapted to the trading filtration. Kabanov and Stricker~\cite{KS2006} showed that absence of arbitrage in this setting is equivalent to the existence of an equivalent probability measure under which the optional projection of prices is a martingale. The process being projected and the measure used to project it must both be specified.

The analogous completeness question has an additional difficulty. Fixing an optional martingale measure produces an adapted projected price process, to which the classical second fundamental theorem applies. Changing the measure, however, may change that process. Even if the projection is the same under every optional martingale measure of the original market, these measures need not exhaust the martingale measures of the fixed projected market. Consequently, uniqueness within the original class can be too weak to imply projected completeness.

We consider completeness for claims measurable with respect to the projected price history while allowing strategies to use the full trading filtration. For a fixed projection, completeness is equivalent to agreement of all its martingale measures on the claim sigma-field. Under a common-projection assumption, completeness implies agreement of the original optional martingale measures on that sigma-field. A three-state counterexample disproves the converse. In the delayed binomial model, the product martingale measure gives a complete projected market, with an explicit replication formula for both risky and cash holdings.

The contribution is a finite-state clarification of the relevant measure classes, illustrated by explicit counterexamples and a constructive delayed binomial model. No general priority claim is made for the fixed-process duality criterion or the replication method. The underlying tools are established. The finite-dimensional separation argument is the standard proof of the second fundamental theorem, as presented by B\"auerle and Rieder~\cite[Theorem~5.1]{BR2017}. Grigorian and Jarrow~\cite{GJ2024,GJ2025} study complete fictitious markets obtained by filtration reduction and the extension of their pricing measures to an original market. Their discrete models project under the statistical measure and impose conditions on extensions. Here the starting point is a measure supplied by the restricted-information first fundamental theorem, and the question concerns the distinction between two classes of measures for a fixed projection. Biagini, Mazzon and Perkki\"o~\cite{BMP2023} study optional projections under changes of equivalent local martingale measures in continuous time. Kardaras and Ruf~\cite{KR2020} show in continuous time that completeness can fail under filtration shrinkage and that projections of deflators need not exhaust the deflators of the smaller market. Their stock remains adapted to the smaller filtration; here the original prices may be nonadapted, and the comparison concerns probability measures for a fixed optional projection. Projected markets, measure-dependent projections, and failures of completeness under information reduction are therefore established topics.

The delayed binomial construction concerns exact replication in the projected market. It has a different target from super-replication at the original prices with delayed trading, studied by Ichiba and Mousavi~\cite{IM2017}. The completeness results have finite state space and finite time horizon. The appendices provide the supporting derivations, a complete proof of the known first fundamental theorem on arbitrary probability spaces, additional examples, and a complete mathematical hedging procedure. No continuous-time or infinite-state completeness assertion is made.

\section{Prices, information, and projected trading}\label{sec:setup}
Let $\Omega$ be finite, let $\F=2^\Omega$, and let $P(\{\omega\})>0$ for every $\omega\in\Omega$. Fix $T\in\mathbb N$ and a filtration $\mathbb F=(\F_t)_{t=0}^T$ with
\begin{equation}\label{eq:filtration}
 \F_0=\{\varnothing,\Omega\},\qquad \F_T\subseteq\F.
\end{equation}
Let $B_t>0$ be deterministic with $B_0=1$, and let $S_t\colon\Omega\to\R_+^d$ be finite-valued, with deterministic $S_0$. Write $\sbar_t=S_t/B_t$ for discounted prices. The variables $S_t$ are $\F$-measurable, but need not be $\F_t$-measurable. In this finite setting, all random variables and finite-valued strategies are bounded and integrable under every probability measure on $\Omega$.

A risky-asset strategy $H=(H_t)_{t=1}^T$ is $\mathbb F$-predictable if $H_t$ is $\F_{t-1}$-measurable. Denote this linear space of strategies by $\mathcal H$. The restricted-information gain space is
\begin{equation}\label{eq:originalgains}
 K_S=\left\{(H\cdot\sbar)_T:H\in\mathcal H\right\},\qquad
 (H\cdot\sbar)_t=\sum_{s=1}^t H_s\cdot(\sbar_s-\sbar_{s-1}).
\end{equation}
We use the no-arbitrage condition $K_S\cap\R_+^\Omega=\{0\}$, which is the gain-based convention of Kabanov and Stricker~\cite{KS2006}. Short positions and borrowing are permitted; there are no further portfolio constraints.

\subsection{Optional martingale measures}
For $Q\sim P$, define the discrete optional projection by
\begin{equation}\label{eq:projection}
 X_t^Q=\E_Q[\sbar_t\mid\F_t],\qquad 0\leq t\leq T,
\end{equation}
and set
\begin{equation}\label{eq:M}
 \M_S=\{Q\sim P:X^Q\text{ is an }(\mathbb F,Q)\text{-martingale}\}.
\end{equation}
Since $Q$ is strictly positive at every state, these conditional expectations are defined on every atom without a choice of null-set versions. The tower property gives
\begin{equation}\label{eq:momentcondition}
 Q\in\M_S
 \quad\Longleftrightarrow\quad
 \E_Q[\Delta\sbar_t\mid\F_{t-1}]=0\quad(1\leq t\leq T).
\end{equation}
Indeed, $\E_Q[X_t^Q\mid\F_{t-1}]-X_{t-1}^Q
=\E_Q[\sbar_t-\sbar_{t-1}\mid\F_{t-1}]$.
Testing~\eqref{eq:momentcondition} against indicators in $\F_{t-1}$ also shows that
\begin{equation}\label{eq:annihilatororiginal}
 \M_S=\{Q\sim P:\E_Q[g]=0\text{ for every }g\in K_S\}.
\end{equation}
The first fundamental theorem under restricted information states that
\begin{equation}\label{eq:firstftap}
 K_S\cap\R_+^\Omega=\{0\}\quad\Longleftrightarrow\quad\M_S\ne\varnothing.
\end{equation}
This is the finite-state instance of~\cite[Theorem~1]{KS2006}. Its general form also addresses closure and integrability on arbitrary probability spaces. Those additional issues do not arise here. In particular, every density is automatically bounded: $dQ/dP$ is a finite vector.

\subsection{The fixed projected market}
Assume $\M_S\ne\varnothing$, choose $Q_0\in\M_S$, and keep
\begin{equation}\label{eq:fixedX}
 X=X^{Q_0}
\end{equation}
fixed. Thus $X$ is an adapted discounted price process and a $Q_0$-martingale. Let
\begin{equation}\label{eq:G}
 \G_t=\sigma(X_0,\ldots,X_t)\subseteq\F_t
\end{equation}
be its natural filtration. Terminal claims will be nonnegative $\G_T$-measurable variables. Trading strategies remain $\mathbb F$-predictable. Thus strategies may use information in $\mathbb F$ beyond the projected price history.

Given deterministic initial capital $c$ and $H\in\mathcal H$, define discounted projected wealth and the holding in the money-market account by
\begin{align}
 W_t&=c+\sum_{s=1}^tH_s\cdot\Delta X_s,\label{eq:wealth}\\
 \beta_t&=W_{t-1}-H_t\cdot X_{t-1},\qquad 1\leq t\leq T.\label{eq:cash}
\end{align}
Then $\beta_t$ is $\F_{t-1}$-measurable and
\begin{equation}\label{eq:selffinance}
 \beta_t+H_t\cdot X_t=W_t
 =\beta_{t+1}+H_{t+1}\cdot X_t\qquad(1\leq t<T).
\end{equation}
Thus $(\beta,H)$ is a predictable self-financing strategy in the auxiliary market with money-market account $B$ and risky prices $B_tX_t$. Its undiscounted wealth is $B_tW_t$.

\begin{definition}\label{def:complete}
A nonnegative $\G_T$-measurable claim $\xi$ is \emph{attainable in the projected market} if $\xi/B_T=c+(H\cdot X)_T$ for some $c\in\R$ and $H\in\mathcal H$. The projected market is \emph{complete for $\G_T$-claims} if every such claim is attainable.
\end{definition}

Put $K_X=\{(H\cdot X)_T:H\in\mathcal H\}$ and $\A_X=\R\ind+K_X$. Completeness is equivalent to $L^0(\G_T)\subseteq\A_X$, since any real claim on a finite space becomes nonnegative after addition of a constant. The space $\A_X$ lies in $L^0(\F_T)$, but need not lie in $L^0(\G_T)$: a trading decision can use information in $\F_{t-1}$ absent from $\G_{t-1}$.

\subsection{The trading convention matters}
The cash construction~\eqref{eq:cash} is justified by adaptedness of $X$. For the original nonadapted prices, the formal self-financing cash balance
\begin{equation}\label{eq:originalcash}
 \beta_t^S=c+(H\cdot\sbar)_{t-1}-H_t\cdot\sbar_{t-1}
\end{equation}
need not be $\F_{t-1}$-measurable. Requiring both cash and risky holdings to be predictable therefore gives a stricter model than~\eqref{eq:originalgains}. The unrestricted enhancement of every predictable risky strategy to a predictable full portfolio is valid in the projected market, but cannot be transferred to the original prices without checking this measurability.

Likewise, projected wealth is not generally the conditional projection of original gain-based wealth. Earlier prices can become better known between their trading dates and maturity, so there is no general identity
\begin{equation}\label{eq:notcommute}
 \E_{Q_0}[(H\cdot\sbar)_T\mid\F_T]=(H\cdot X)_T.
\end{equation}
An explicit failure appears in Section~\ref{sec:binomialexample}. Definition~\ref{def:complete} is an exact replication statement about the auxiliary prices $B_tX_t$; it is not a claim that the same portfolio replicates at prices $S_t$.

\section{Completeness and the class of measures}\label{sec:criterion}
For the fixed process $X$ in~\eqref{eq:fixedX}, define
\begin{equation}\label{eq:N}
 \N_X=\{R\sim P:X\text{ is an }(\mathbb F,R)\text{-martingale}\}.
\end{equation}
Here $X$ stays fixed when $R$ changes. In particular, $R\in\N_X$ does not assert that $X=\E_R[\sbar\mid\mathbb F]$, nor that $R\in\M_S$. We have $Q_0\in\N_X$.

\begin{lemma}\label{lem:annihilator}
For $R\sim P$, membership in $\N_X$ is equivalent to $\E_R[g]=0$ for every $g\in K_X$. If $R\in\N_X$, every wealth process~\eqref{eq:wealth} is an $(\mathbb F,R)$-martingale.
\end{lemma}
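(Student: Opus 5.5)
The plan mirrors the argument behind~\eqref{eq:momentcondition} and~\eqref{eq:annihilatororiginal}, now with the fixed adapted process $X$ in place of $\sbar$. Since $X$ is $\mathbb F$-adapted, $X$ is an $(\mathbb F,R)$-martingale if and only if
\[
\E_R[\Delta X_t\mid\F_{t-1}]=0\qquad(1\leq t\leq T).
\]
Because $R$ charges every state, these conditional expectations are defined atom by atom without null-set ambiguity. Integrability is automatic on the finite space.

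For the forward direction, take $R\in\N_X$ and $g=(H\cdot X)_T\in K_X$. Each term $H_t\cdot\Delta X_t$ has $H_t$ bounded and $\F_{t-1}$-measurable. Conditioning gives
\[
\E_R[H_t\cdot\Delta X_t]=\E_R\bigl[H_t\cdot\E_R[\Delta X_t\mid\F_{t-1}]\bigr]=0,
\]
and summing over $t$ yields $\E_R[g]=0$.

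For the converse, suppose $\E_R[g]=0$ for all $g\in K_X$. Fix $t$, a coordinate $i\in\{1,\ldots,d\}$ and an atom (or any set) $A\in\F_{t-1}$. Choose the strategy $H_t=\ind_A e_i$ and $H_s=0$ for $s\neq t$; it is $\F_{t-1}$-measurable, hence in $\mathcal H$. Its gain is $\ind_A\,\Delta X_t^i$, so the hypothesis gives $\E_R[\ind_A\Delta X^i_t]=0$. Since $A\in\F_{t-1}$ is arbitrary, this means $\E_R[\Delta X^i_t\mid\F_{t-1}]=0$, which is the martingale property for each coordinate and time.

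For the second assertion, take $R\in\N_X$ and a wealth process $W$ from~\eqref{eq:wealth}. It is adapted, since $H_s$ is $\F_{s-1}\subseteq\F_t$-measurable and $X$ is adapted. It is integrable because everything is finite-valued. Its increment satisfies
\[
\E_R[W_t-W_{t-1}\mid\F_{t-1}]=H_t\cdot\E_R[\Delta X_t\mid\F_{t-1}]=0,
\]
where predictability of $H_t$ lets it be pulled out of the conditional expectation.

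No step is a real obstacle. The point to keep explicit is that all conditional expectations are taken under $R$ with $X$ held fixed. The argument never uses $X=\E_R[\sbar\mid\mathbb F]$; it uses only adaptedness of $X$ and predictability of $H$.
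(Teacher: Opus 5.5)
Your proposal is correct and follows essentially the same route as the paper's proof. Predictability of $H_t$ under $\E_R[\,\cdot\mid\F_{t-1}]$ gives zero expected gains and the martingale property of the wealth process, and the indicator strategies $H_t=\ind_A e_i$ with $A\in\F_{t-1}$ give $\E_R[\Delta X_t^i\mid\F_{t-1}]=0$ for the converse.
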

\begin{proof}
If $X$ is an $R$-martingale, predictability gives
$\E_R[H_t\cdot\Delta X_t\mid\F_{t-1}]=0$ for each $t$. Summation proves both assertions. Conversely, test zero expected gains with the strategy that holds the $i$th asset only at time $t$, in amount $\ind_A$, where $A\in\F_{t-1}$. This gives $\E_R[\ind_A\Delta X_t^i]=0$ for every such $A$, hence $\E_R[\Delta X_t^i\mid\F_{t-1}]=0$. Integrability holds because the space and horizon are finite.
\end{proof}

\begin{theorem}[Completeness for a fixed projection]\label{thm:fixed}
Under the assumptions of Section~\ref{sec:setup}, the following are equivalent:
\begin{enumerate}[label=(\roman*)]
\item $L^0(\G_T)\subseteq\A_X$.
\item All measures in $\N_X$ agree on $\G_T$:
\begin{equation}\label{eq:restrictedunique}
 \bigl|\{R|_{\G_T}:R\in\N_X\}\bigr|=1.
\end{equation}
\end{enumerate}
For any attainable discounted claim $h$, its initial capital and projected wealth process satisfy
\begin{equation}\label{eq:value}
 c=\E_R[h],\qquad W_t=\E_R[h\mid\F_t]
 \quad\text{for every }R\in\N_X.
\end{equation}
These quantities are independent of $R$ and of the choice of replicating strategy.
\end{theorem}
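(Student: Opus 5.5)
We will argue in finite dimensions, viewing random variables on $\Omega$ as vectors in $\R^\Omega$ and using Lemma~\ref{lem:annihilator}, which identifies $\N_X$ with the equivalent probability measures annihilating $K_X$. We begin with the valuation statement~\eqref{eq:value}, since it is needed in both directions. Suppose $h=c+(H\cdot X)_T$ and $R\in\N_X$. By Lemma~\ref{lem:annihilator} the wealth process $W$ is an $(\mathbb F,R)$-martingale with $W_T=h$. Hence $W_t=\E_R[h\mid\F_t]$, and because $\F_0$ is trivial, $c=W_0=\E_R[h]$. The left-hand sides depend only on $h$ and $c$, and the right-hand sides depend only on $R$ and $h$. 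Therefore both quantities are independent of the choice of $R\in\N_X$ and of the replicating strategy. Note that $\N_X\ne\varnothing$ because $Q_0\in\N_X$.

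For (i)$\Rightarrow$(ii), take $R_1,R_2\in\N_X$ and $A\in\G_T$. By (i), $\ind_A\in\A_X$, so $\ind_A=c+g$ with $g\in K_X$. Applying~\eqref{eq:value}, or directly $\E_{R_i}[g]=0$, gives $R_1(A)=c=R_2(A)$. Hence the restrictions of all measures in $\N_X$ to $\G_T$ coincide.

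For (ii)$\Rightarrow$(i) we argue by contraposition, following the separation proof of the second fundamental theorem. Suppose some $\xi\in L^0(\G_T)$ is not in $\A_X$. The set $\A_X$ is a linear subspace of $\R^\Omega$ and hence closed, so there is a vector $z\in\R^\Omega$ with $\E_{Q_0}[z\,a]=0$ for all $a\in\A_X$ and $\E_{Q_0}[z\,\xi]\ne0$. Take $z$ to be the $Q_0$-orthogonal projection residual of $\xi$ onto $\A_X$.

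The main obstacle is to arrange that the perturbation changes the measure on $\G_T$, and not merely on $\F$. Here the $\G_T$-measurability of $\xi$ is used. Replace $z$ by $\zeta=\E_{Q_0}[z\mid\G_T]$. Then $\E_{Q_0}[\zeta\xi]=\E_{Q_0}[z\xi]\ne0$, so $\zeta\ne0$. However, $\zeta$ need no longer annihilate $K_X$, because $K_X\not\subseteq L^0(\G_T)$. So we keep $z$ itself rather than $\zeta$.

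We define $dR/dQ_0=1+\varepsilon z$ with $\varepsilon>0$ small enough that this density is strictly positive. The density is available because $\Omega$ is finite and $Q_0$ charges every state. Since $\ind\in\A_X$, we have $\E_{Q_0}[z]=0$, so $R$ is a probability measure equivalent to $P$. Moreover, $\E_R[g]=\E_{Q_0}[g]+\varepsilon\E_{Q_0}[zg]=0$ for every $g\in K_X$, so $R\in\N_X$ by Lemma~\ref{lem:annihilator}.

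Finally, $\E_R[\xi]-\E_{Q_0}[\xi]=\varepsilon\E_{Q_0}[z\xi]\ne0$. Because $\xi$ is $\G_T$-measurable, this shows $R|_{\G_T}\ne Q_0|_{\G_T}$, contradicting~\eqref{eq:restrictedunique}. Thus $\G_T$-measurability of the claim is exactly what turns a difference in expectations of $\xi$ into a difference on $\G_T$. The trading filtration $\mathbb F$ enters only through $K_X$, so no compatibility between $\mathbb F$ and $\G_T$ is required.
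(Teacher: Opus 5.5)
Your proof is correct and follows the paper's argument essentially step for step. You use the martingale property of projected wealth under every $R\in\N_X$ for the valuation formula and for (i)$\Rightarrow$(ii). For the converse you take the $Q_0$-orthogonal residual of the specific unattainable $\G_T$-claim as the perturbation direction in $1+\varepsilon z$, so that $\E_R[\xi]\ne\E_{Q_0}[\xi]$ forces different restrictions to $\G_T$. The detour through $\zeta=\E_{Q_0}[z\mid\G_T]$ is unnecessary, and you rightly discard it, since $\E_{Q_0}[z\xi]=\E_{Q_0}[z^2]>0$ already does the work.
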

\begin{proof}
Suppose (i) holds. For $A\in\G_T$, choose a representation $\ind_A=c+g$ with $g\in K_X$. Lemma~\ref{lem:annihilator} gives $R(A)=c$ for every $R\in\N_X$. Thus all restrictions agree. More generally, the wealth process of any representation of $h$ is an $R$-martingale ending at $h$, so~\eqref{eq:value} follows. For two replicating strategies, use the same $R$ in this conditional-expectation identity to obtain equality of their wealth processes.

For the converse, suppose that a nonnegative $\G_T$-measurable discounted claim $h$ is not in $\A_X$. On $\R^\Omega$, use the inner product $\langle f,g\rangle=\E_{Q_0}[fg]$. This is positive definite because $Q_0$ has full support. The subspace $\A_X$ is closed. Let $a$ be the orthogonal projection of $h$ onto $\A_X$, and set $Z=h-a$. Then
\begin{equation}\label{eq:orthogonal}
 Z\ne0,\qquad \E_{Q_0}[ZY]=0\quad(Y\in\A_X),\qquad
 \E_{Q_0}[Zh]=\E_{Q_0}[Z^2]>0.
\end{equation}
In particular, $\E_{Q_0}[Z]=0$ because $\ind\in\A_X$. Define a probability measure $R$ by
\begin{equation}\label{eq:perturbation}
 \frac{dR}{dQ_0}=1+\varepsilon Z,\qquad
 \varepsilon=\frac{1}{2\|Z\|_\infty}.
\end{equation}
The density lies between $1/2$ and $3/2$ and has expectation one. Hence $R\sim Q_0\sim P$. For $g\in K_X\subseteq\A_X$, equations~\eqref{eq:orthogonal} and Lemma~\ref{lem:annihilator} imply
\[
 \E_R[g]=\E_{Q_0}[g]+\varepsilon\E_{Q_0}[Zg]=0.
\]
The same lemma yields $R\in\N_X$. On the other hand,
\begin{equation}\label{eq:differentclaim}
 \E_R[h]-\E_{Q_0}[h]
 =\varepsilon\E_{Q_0}[Z^2]>0.
\end{equation}
Since $h$ is $\G_T$-measurable, $R$ and $Q_0$ cannot agree on $\G_T$. This contradicts (ii).
\end{proof}

The perturbation is the classical finite-dimensional argument of~\cite[Theorem~5.1]{BR2017}. Choosing the residual of the particular unattainable claim, rather than an arbitrary nonzero element of $\A_X^\perp$, is essential here: it proves that the perturbed measure changes the restriction to $\G_T$. A nonzero perturbation on the larger sigma-field $\F$ alone would not establish that conclusion. Neither $Z$ nor $a$ needs to be $\G_T$-measurable.

When $\mathbb F=\mathbb G$ and $\F=\G_T$, Theorem~\ref{thm:fixed} is the usual second fundamental theorem. Adaptedness alone does not imply these equalities. Additional information in the trading filtration or the ambient sigma-field must still be accounted for.

\subsection{A common projection for the original market}
Suppose that the projection is common to all original optional martingale measures:
\begin{equation}\label{eq:invariance}
 X_t^Q=X_t^{Q_0}=X_t
 \quad\text{for every }Q\in\M_S\text{ and every }t.
\end{equation}
It removes the dependence of $X$ and $\mathbb G$ on the selected original optional martingale measure. It also implies
\begin{equation}\label{eq:inclusion}
 \M_S\subseteq\N_X,
\end{equation}
since each $Q\in\M_S$ makes its own projection, now equal to $X$, a martingale.

\begin{corollary}\label{cor:necessary}
Assume~\eqref{eq:invariance}. If the projected market is complete for $\G_T$-claims, then
\begin{equation}\label{eq:originalunique}
 \bigl|\{Q|_{\G_T}:Q\in\M_S\}\bigr|=1.
\end{equation}
For every attainable claim $\xi$, its projected value is
$B_t\E_Q[\xi/B_T\mid\F_t]$ for every $Q\in\M_S$.
\end{corollary}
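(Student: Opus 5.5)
The plan is to combine Theorem~\ref{thm:fixed} with the inclusion~\eqref{eq:inclusion}; no new construction is needed. First, completeness for $\G_T$-claims means exactly $L^0(\G_T)\subseteq\A_X$, as noted after Definition~\ref{def:complete}. So Theorem~\ref{thm:fixed} gives that all measures in $\N_X$ have the same restriction to $\G_T$. Under~\eqref{eq:invariance} we have $\M_S\subseteq\N_X$ by~\eqref{eq:inclusion}. Hence the restrictions $\{Q|_{\G_T}:Q\in\M_S\}$ form a subset of a singleton. This subset is nonempty because $Q_0\in\M_S$ (the standing assumption $\M_S\neq\varnothing$ of Section~\ref{sec:setup}). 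This yields~\eqref{eq:originalunique}.

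For the valuation statement, take an attainable claim $\xi$, so that $h=\xi/B_T\in\A_X$ with a replicating pair $(c,H)$ and projected wealth $W$. By~\eqref{eq:value}, $W_t=\E_R[h\mid\F_t]$ for every $R\in\N_X$, independently of $R$ and of the replicating strategy. Every $Q\in\M_S$ lies in $\N_X$, again by~\eqref{eq:inclusion}. Specialising $R=Q$ therefore gives $W_t=\E_Q[\xi/B_T\mid\F_t]$. Multiplying by the deterministic $B_t$ converts this to undiscounted projected wealth, $B_tW_t=B_t\E_Q[\xi/B_T\mid\F_t]$, which is the asserted projected value.

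The only point needing care is the inclusion~\eqref{eq:inclusion}: each $Q\in\M_S$ makes $X^Q$ an $(\mathbb F,Q)$-martingale, and~\eqref{eq:invariance} identifies $X^Q$ with the fixed $X$. Once this is granted, the corollary follows directly and there is no real obstacle. I would also note that the hypothesis~\eqref{eq:invariance} is genuinely used: without it, an element of $\M_S$ need not make $X$ a martingale, so it need not lie in $\N_X$.
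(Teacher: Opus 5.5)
Your proposal is correct and follows the paper's own argument: the paper proves both statements in one line by combining Theorem~\ref{thm:fixed} with the inclusion~\eqref{eq:inclusion}, exactly as you do. Your added details, namely nonemptiness via $Q_0\in\M_S$ and multiplication by the deterministic $B_t$ to pass from $W_t$ to undiscounted projected wealth, simply make that one-line proof explicit.
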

\begin{proof}
Both statements follow from Theorem~\ref{thm:fixed} and~\eqref{eq:inclusion}.
\end{proof}

Restricted uniqueness within $\M_S$ is therefore necessary for projected completeness. The converse would require control over measures in $\N_X$ beyond the original class $\M_S$. For example, equality of the restriction sets in~\eqref{eq:restrictedunique} and~\eqref{eq:originalunique} would suffice. Assumption~\eqref{eq:invariance} supplies only the inclusion, and does not supply that equality.

\section{Limits of the original measure criterion}\label{sec:limits}
\subsection{Uniqueness without projected completeness}
The next example shows that the missing converse cannot be recovered from projection invariance alone.

\begin{example}\label{ex:counter}
Let $\Omega=\{\omega_1,\omega_2,\omega_3\}$, $T=2$, $B_t=1$, and
\[
 \F_0=\F_1=\{\varnothing,\Omega\},\qquad \F_2=2^\Omega.
\]
For one risky asset, take the following prices.
\begin{center}
\begin{tabular}{c@{\qquad}rrr}
\toprule
State & $S_0$ & $S_1$ & $S_2$\\
\midrule
$\omega_1$ & 2 & 1 & 3\\
$\omega_2$ & 2 & 2 & 1\\
$\omega_3$ & 2 & 3 & 2\\
\bottomrule
\end{tabular}
\end{center}
Writing $q_i=Q(\{\omega_i\})$, condition~\eqref{eq:momentcondition} becomes
\begin{equation}\label{eq:counterconstraints}
 q_1+q_2+q_3=1,\qquad
 q_1+2q_2+3q_3=2,\qquad
 3q_1+q_2+2q_3=2.
\end{equation}
Subtracting twice the first equation from the second gives $q_3=q_1$; doing the same with the third gives $q_1=q_2$. Thus $\M_S=\{Q_0\}$, where $Q_0$ is uniform. The model satisfies~\eqref{eq:firstftap}, and~\eqref{eq:invariance} holds automatically.

Nevertheless, its projected process is
\[
 X_0=X_1=2,\qquad X_2=S_2.
\]
The three terminal values of $X$ are distinct, so $\G_T=2^\Omega$. Every predictable strategy has constant $H_1,H_2$, and therefore
\[
 \A_X=\{c+h(S_2-2):c,h\in\R\}.
\]
The claim $\ind_{\{\omega_2\}}$ is not in this space. Evaluation at $\omega_3$ forces $c=0$, and evaluation at $\omega_1$ then forces $h=0$, contradicting its value at $\omega_2$. The projected market is incomplete, despite uniqueness of the original optional martingale measure on the entire sigma-field.

The additional measures required by Theorem~\ref{thm:fixed} are visible explicitly:
\begin{equation}\label{eq:Ncounter}
 \N_X=\{(a,a,1-2a):0<a<1/2\}.
\end{equation}
For example, $R=(1/4,1/4,1/2)$ makes the fixed $X$ a martingale, but $\E_R[S_1]=9/4\ne2$, so $R\notin\M_S$.
\end{example}

A perturbation that annihilates gains of $X^{Q_0}$ belongs to $\N_X$, but need not belong to $\M_S$. Applying projection invariance to that measure before establishing its membership in $\M_S$ would be circular. Example~\ref{ex:counter} makes the obstruction explicit: a measure can preserve the martingale property of $X^{Q_0}$ while violating an original price-increment constraint.

\subsection{Uniqueness is relative to the claim sigma-field}\label{sec:claimfield}
A one-period market with trivial information illustrates the role of the claim sigma-field. Let $S_0=100$, let $S_1$ take the values $120,100,80$, let $B=1$, and set $\F_0=\F_1=\{\varnothing,\Omega\}$. Then
\[
 \M_S=\{(a,1-2a,a):0<a<1/2\}.
\]
For every such measure, $X_0=X_1=100$, and $\G_T$ is trivial. Every $\G_T$-claim is a constant and is attainable using cash alone. All restrictions to $\G_T$ coincide, although the measures on $2^\Omega$ differ. Completeness for a restricted claim class does not require uniqueness on unobserved events.

\subsection{Dependence on the projection measure}\label{sec:dependence}
Projection invariance is itself a substantive assumption. In the following four-state example, all listed probabilities are terminal-state masses. Let $T=2$, $B=1$, $S_0=100$, and let $\F_0=\F_1$ be trivial. At time two, the only observed event is $A=\{u,m_a,m_b\}$, so $\F_2=\sigma(A)$. Define
\begin{center}
\begin{tabular}{c@{\qquad}rrrr}
\toprule
State & $S_1$ & $S_2$ & $Q_1$ & $Q_2$\\
\midrule
$u$   & 120 & 110 & $1/3$ & $1/4$\\
$m_a$ & 100 & 110 & $1/6$ & $1/4$\\
$m_b$ & 100 & 90  & $1/6$ & $1/4$\\
$d$   & 80  & 90  & $1/3$ & $1/4$\\
\bottomrule
\end{tabular}
\end{center}
Under both measures, $\E[S_1]=\E[S_2]=100$, hence $Q_1,Q_2\in\M_S$ by~\eqref{eq:momentcondition}. Both projections are $100$ at times zero and one. At time two, however,
\begin{equation}\label{eq:measuredependence}
 X_2^{Q_1}=105\ind_A+90\ind_{A^c},\qquad
 X_2^{Q_2}=\frac{310}{3}\ind_A+90\ind_{A^c}.
\end{equation}
For example, the first conditional value is
$(110/3+110/6+90/6)/(2/3)=105$.
Each fixed projected market has two terminal values and is complete for $\sigma(A)$: a claim taking values $h_A,h_{A^c}$ is represented by
\[
 H_2=\frac{h_A-h_{A^c}}{x_A-90},\qquad
 c=h_{A^c}+10H_2,
\]
where $x_A$ is the corresponding value in~\eqref{eq:measuredependence} and $H_1=0$. Nevertheless, the initial values of $\ind_A$ are $Q_1(A)=2/3$ and $Q_2(A)=3/4$. Completeness of each selected projection does not define a common pricing model.

\section{The binomial market with delayed information}\label{sec:binomial}
Let $0<d<u$, let $b=1+r>0$, and assume
\begin{equation}\label{eq:binomialNA}
 d<b<u.
\end{equation}
On $\Omega=\{d,u\}^T$, with every state having positive $P$-probability, set
\begin{equation}\label{eq:binomial}
 B_t=b^t,\qquad S_t=S_0\prod_{j=1}^tY_j,\qquad
 \sbar_t=S_0\prod_{j=1}^t\frac{Y_j}{b},\qquad S_0>0,
\end{equation}
where $Y_j$ is the $j$th coordinate. The full price-history filtration is
$\mathcal D_t=\sigma(Y_1,\ldots,Y_t)=\sigma(S_0,\ldots,S_t)$.
For $k\in\mathbb N_0$, the delayed trading filtration is
\begin{equation}\label{eq:delay}
 \F_t^k=\mathcal D_{(t-k)^+},\qquad (t-k)^+=\max(t-k,0).
\end{equation}
Specify $Q_0$ as the \emph{product} measure under which the coordinates are independent and
\begin{equation}\label{eq:q}
 Q_0(Y_j=u)=q=\frac{b-d}{u-d},\qquad Q_0(Y_j=d)=1-q.
\end{equation}
The marginal probabilities alone would not specify the measure. Condition~\eqref{eq:binomialNA} gives $0<q<1$, so $Q_0\sim P$. Independence under $Q_0$, rather than under an unspecified physical measure, is used below.

\begin{proposition}[Delayed binomial projection]\label{prop:binomial}
In the model~\eqref{eq:binomial}--\eqref{eq:q}, $Q_0\in\M_S$ for the filtration $\mathbb F^k$, and
\begin{equation}\label{eq:binomialprojection}
 X_t=\E_{Q_0}[\sbar_t\mid\F_t^k]=\sbar_{(t-k)^+}.
\end{equation}
The projected market is complete for $\F_T^k$-claims. Its natural filtration equals $\mathbb F^k$, and its nonconstant evolution is a binomial market with $n=(T-k)^+$ periods.
\end{proposition}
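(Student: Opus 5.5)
I will prove the projection formula first, read off $Q_0\in\M_S$ from it, and then deduce completeness from Theorem~\ref{thm:fixed} by identifying $\N_X$ with the measures that make a classical binomial market a martingale.

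\textbf{Step 1: the projection.} Write $m=(t-k)^+$. Under the product measure $Q_0$, the coordinates $Y_{m+1},\ldots,Y_t$ are independent of $\mathcal D_m=\F_t^k$. Also $\E_{Q_0}[Y_j/b]=(qu+(1-q)d)/b=1$. Factoring
\[
\sbar_t=\sbar_m\prod_{j=m+1}^t Y_j/b
\]
therefore gives $X_t=\E_{Q_0}[\sbar_t\mid\F^k_t]=\sbar_m$. This is the only place where the product structure (independence, not merely the marginals) is used.

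\textbf{Step 2: $Q_0\in\M_S$.} Since $X_t=\sbar_{(t-k)^+}$, the process $X$ is constant equal to $S_0$ for $t\le k$. For $t>k$ it is the classical discounted binomial price, delayed by $k$ periods. Its increment $\Delta X_t=\sbar_{t-k}-\sbar_{t-k-1}$ has conditional expectation zero given $\F_{t-1}^k=\mathcal D_{t-k-1}$, again by independence and $\E_{Q_0}[Y_{t-k}/b]=1$. So $X^{Q_0}$ is a martingale, which is the definition~\eqref{eq:M}. Equivalently, one can check the moment condition~\eqref{eq:momentcondition} directly.

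\textbf{Step 3: the natural filtration.} Since $\G_t=\sigma(\sbar_0,\ldots,\sbar_{(t-k)^+})=\mathcal D_{(t-k)^+}$, we have $\G=\mathbb F^k$. Hence $\G_T=\F_T^k=\mathcal D_n$ with $n=(T-k)^+$. The time shift $s=t-k$ identifies the nonconstant part with the $n$-period binomial market on the filtration $(\mathcal D_s)$. Trading at times $t\le k$ produces zero gains because $\Delta X_t=0$ there.

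\textbf{Step 4: completeness via Theorem~\ref{thm:fixed}.} I must show that all $R\in\N_X$ agree on $\mathcal D_n$. Take $R\in\N_X$ and fix an atom $\{Y_1=y_1,\ldots,Y_{s-1}=y_{s-1}\}$ of $\mathcal D_{s-1}$ with $s=t-k\ge1$. On that atom, $\Delta X_t=\sbar_{s-1}(Y_s/b-1)$, and $\sbar_{s-1}>0$. So the martingale condition forces
\[
R(Y_s=u\mid\mathcal D_{s-1})\,u+\bigl(1-R(Y_s=u\mid\mathcal D_{s-1})\bigr)d=b,
\]
that is, $R(Y_s=u\mid\mathcal D_{s-1})=q$. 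Multiplying these conditional probabilities along a path determines $R$ on $\mathcal D_n$: it equals $Q_0|_{\mathcal D_n}$. This gives condition~(ii), and Theorem~\ref{thm:fixed} yields completeness. (If $n=0$, then $\G_T$ is trivial and completeness is immediate.)

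\textbf{Main obstacle.} I expect the main difficulty to be bookkeeping rather than depth. Points that need care:
\begin{itemize}
\item index shifts around $t\le k$ when $k\ge T$;
\item the fact that $R$ restricted to $\F$ is not unique, since the last $k$ coordinates are unobserved;
\item that none of this non-uniqueness matters, because only the restriction to $\G_T$ enters.
\end{itemize}
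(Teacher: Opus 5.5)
Your proof is correct. For the projection formula, the martingale property of $X^{Q_0}$, and the identification $\G=\mathbb F^k$, it matches the paper's argument up to presentation. The paper phrases the first step as ``$\sbar$ is a $(\mathbb D,Q_0)$-martingale; condition at $(t-k)^+$'' and justifies $\sigma(X_0,\dots,X_t)=\F^k_t$ by price ratios; you may simply cite the setup identity $\mathcal D_t=\sigma(S_0,\dots,S_t)$.

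The genuine difference is completeness. The paper proves it constructively. It runs the Cox--Ross--Rubinstein backward recursion \eqref{eq:backward}--\eqref{eq:localreplication} on the $n$-period tree and shifts the resulting deltas by $k$ in calendar time via \eqref{eq:shiftedstrategy}, so that $c+(H\cdot X)_T=h$ holds pathwise. Agreement of all $R\in\N_X$ on $\F^k_T$ is only read off afterwards from Theorem~\ref{thm:fixed}.

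You go the other way. You compute $\N_X|_{\mathcal D_n}$ directly: each one-step conditional up-probability is forced to equal $q$ because $\sbar_{s-1}>0$ and $u\neq d$, and dates $t\le k$ impose no constraint since $\Delta X_t=0$ there. You then invoke (ii)$\Rightarrow$(i) of Theorem~\ref{thm:fixed}.

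Your route is shorter and shows the duality criterion at work, with an explicit description of $\N_X$ on the claim sigma-field. However, it is non-constructive: it rests on the orthogonal-projection and perturbation argument, and it yields neither the holdings $a_j$ nor the cash positions. The paper needs those for its two-period example and for the hedging procedure in Appendix~\ref{app:computation}. The paper's route, conversely, does not depend on the general theorem at all.

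One small expository slip: you say Step 1 is the only place independence is used, but your Step 2 uses it again (or, equivalently, the $(\mathbb D,Q_0)$-martingale property of $\sbar$).
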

\begin{proof}
Since $qu+(1-q)d=b$, the discounted price $\sbar$ is a martingale in $\mathbb D$ under $Q_0$. Conditioning at time $(t-k)^+$ proves~\eqref{eq:binomialprojection}. For $t\leq k$, $X_t=S_0$. For $t>k$, it reveals the successive discounted prices $\sbar_1,\ldots,\sbar_{t-k}$. Successive price ratios determine the coordinates $Y_j$, because $S_0>0$ and $d\ne u$. Hence $\sigma(X_0,\ldots,X_t)=\F_t^k$.

For $t<k$, $X_{t+1}=X_t$. For $t\geq k$,
\[
 \E_{Q_0}[X_{t+1}\mid\F_t^k]
 =\E_{Q_0}[\sbar_{t+1-k}\mid\mathcal D_{t-k}]
 =\sbar_{t-k}=X_t.
\]
Thus $X$ is an $(\mathbb F^k,Q_0)$-martingale, which proves $Q_0\in\M_S$ and absence of arbitrage in the original gain model.

For completeness, take a nonnegative $\mathcal D_n$-measurable claim $\xi$, and write $h=\xi/B_T$. If $n=0$, $h$ is constant and cash suffices. Suppose $n>0$. On the tree of histories of length at most $n$, set $v_n=h$ and recursively define
\begin{equation}\label{eq:backward}
 v_{j-1}(z)=qv_j(z,u)+(1-q)v_j(z,d),\qquad 1\leq j\leq n.
\end{equation}
Here $z$ is a history of length $j-1$, and $v_j(z,u)$ and $v_j(z,d)$ are the two successor values. If $x=\sbar_{j-1}(z)$, define
\begin{equation}\label{eq:delta}
 a_j(z)=\frac{v_j(z,u)-v_j(z,d)}{x(u-d)/b}.
\end{equation}
The denominator is positive. Equations~\eqref{eq:q} and~\eqref{eq:backward} imply, for either successor $y\in\{d,u\}$,
\begin{equation}\label{eq:localreplication}
 v_j(z,y)=v_{j-1}(z)+a_j(z)\bigl(xy/b-x\bigr).
\end{equation}
Thus $a_j$ is $\mathcal D_{j-1}$-measurable and
$h=v_0+\sum_{j=1}^n a_j\Delta\sbar_j$ on every path.

Set $c=v_0$ and use the delayed strategy
\begin{equation}\label{eq:shiftedstrategy}
 H_t=\begin{cases}
 0,&t\leq k,\\
 a_{t-k},&t>k,
 \end{cases}\qquad 1\leq t\leq T.
\end{equation}
For $t>k$, its measurability is exactly $\mathcal D_{t-k-1}=\F_{t-1}^k$; before then it is deterministic. By~\eqref{eq:binomialprojection},
\[
 c+(H\cdot X)_T
 =v_0+\sum_{j=1}^na_j\Delta\sbar_j=h.
\]
The cash holding~\eqref{eq:cash} makes this strategy self-financing at the projected prices. Multiplication by $B_T$ yields $\xi$. In particular, discounting uses the actual maturity $T$, even though the effective binomial horizon is $n$.
\end{proof}

The effective binomial replication is the familiar construction of Cox, Ross and Rubinstein~\cite{CRR1979}, applied to discounted prices and shifted in trading time. By Theorem~\ref{thm:fixed}, the measures in $\N_X$ agree on $\F_T^k$. Their extensions to the full path sigma-field need not be unique.

\subsection{A two-period claim}\label{sec:binomialexample}
Take $T=2$, $k=1$, $S_0=100$, $u=1.2$, $d=0.8$, and $b=1$. The measure $Q_0$ is uniform on the four paths. Then
\[
 X_0=X_1=100,\qquad X_2=S_1,\qquad
 C=\ind_{\{S_1=120\}}.
\]
A projected hedge holds $H_1=H_2=1/40$ shares and $\beta_1=\beta_2=-2$ units of cash, with initial capital $1/2$. Since the projected price has no first-period movement, using $H_1=0$, $\beta_1=1/2$ as in~\eqref{eq:shiftedstrategy} gives the same projected wealth. For the constant-holdings version, terminal values are as follows.
\begin{center}
\begin{tabular}{c@{\qquad}rrrr}
\toprule
Path & $S_2$ & $X_2$ & $-2+X_2/40=C$ & $-2+S_2/40$\\
\midrule
$uu$ & 144 & 120 & 1 & $8/5$\\
$ud$ & 96  & 120 & 1 & $2/5$\\
$du$ & 96  & 80  & 0 & $2/5$\\
$dd$ & 64  & 80  & 0 & $-2/5$\\
\bottomrule
\end{tabular}
\end{center}
The strategy is self-financing in both markets because its holdings are constant. It replicates $C$ at the projected prices, but its terminal value at the original prices differs from $C$. In this particular example that terminal value has conditional expectation $C$ under $Q_0$ given $\F_2^1$. This equality for one strategy does not imply~\eqref{eq:notcommute} in general.

Indeed, take the equally predictable risky strategy $H_1=1/40$, $H_2=0$. At the original prices,
\begin{equation}\label{eq:gainreplication}
 \frac12+(H\cdot S)_2=\frac12+\frac{S_1-100}{40}=C,
\end{equation}
whereas $\frac12+(H\cdot X)_2=\frac12$. The left-hand side of~\eqref{eq:gainreplication} is already $\F_2^1$-measurable, disproving the general conditional-projection identity. Its original cash position in the second period is $C$, which is not $\F_1^1$-measurable. It is admissible under the risky-strategy gain convention~\eqref{eq:originalgains}, but not under a convention requiring both holdings to be predictable.

Under the stricter convention requiring both holdings to be predictable, $\beta_2$ and $H_2$ are constants. Since $S_2(ud)=S_2(du)=96$ but $C(ud)\ne C(du)$, no such terminal portfolio can replicate $C$. That conclusion must not be attributed to the different gain model~\eqref{eq:originalgains}.

\subsection{Delay does not imply projection invariance}\label{sec:delayinvariance}
Proposition~\ref{prop:binomial} uses the particular product measure $Q_0$. Even in its two-period example, other original optional martingale measures can give different projections. In path order $(uu,ud,du,dd)$, let
\[
 Q_1=(1/3,1/6,1/8,3/8).
\]
The first-step up probability is $1/2$, and direct calculation gives
$\E_{Q_1}[S_1]=\E_{Q_1}[S_2]=100$.
Since $\F_0^1=\F_1^1$ is trivial, $Q_1\in\M_S$. However,
\begin{equation}\label{eq:binomialdependence}
 \E_{Q_1}[S_2\mid S_1=120]=128,\qquad
 \E_{Q_1}[S_2\mid S_1=80]=72.
\end{equation}
The projection therefore differs from $X_2=S_1$ under $Q_0$. The hypothesis that every original optional martingale measure also be a martingale measure in the full price filtration would justify the common delayed-price formula. It is not a consequence of delayed information and fails here.

\section{Scope of the completeness statement}
There are three distinct choices in the completeness question: the process used for gains, the filtration used for trading, and the terminal sigma-field of claims. For a fixed optional projection, finite-dimensional duality gives Theorem~\ref{thm:fixed} with the full class $\N_X$ of martingale measures of that process. Under a common-projection assumption, completeness also gives restricted uniqueness within $\M_S$, as in Corollary~\ref{cor:necessary}. Example~\ref{ex:counter} shows why the latter class cannot generally replace the former in an equivalence.

The delayed binomial model gives a constructive positive result at a specified measure. It preserves all claims observable in the projected price history while reducing the effective trading horizon. The result identifies the completeness of an auxiliary adapted market. Replication at the original prices requires a separate check of both terminal payoff and the admissibility of the cash and risky positions.

\appendix
\input{elementary}
\input{first_ftap}
\input{supplement}
\input{computation}

\clearpage

\input{main.bbl}
\end{document}

%% file: elementary.tex
\section{Supporting martingale and portfolio arguments}\label{app:elementary}
The auxiliary results in this appendix are classical; the financial statements follow the framework of~\cite{BR2017}. A probability space in this appendix need not be finite unless explicitly stated.

\subsection{Measurability and optional projection}
\begin{lemma}\label{lem:trivial}
Every real-vector-valued function measurable with respect to $\{\varnothing,\Omega\}$ on a nonempty space is constant.
\end{lemma}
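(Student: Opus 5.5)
The argument works directly from the definition of measurability with respect to the trivial sigma-field. Let $f\colon\Omega\to\R^m$ be measurable with respect to $\{\varnothing,\Omega\}$, with the Borel sigma-field on $\R^m$. Fix a point $\omega_0\in\Omega$, which exists because $\Omega$ is nonempty, and set $x=f(\omega_0)$. I will show that $f(\omega)=x$ for every $\omega\in\Omega$.

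Singletons in $\R^m$ are closed and therefore Borel, so measurability gives $f^{-1}(\{x\})\in\{\varnothing,\Omega\}$. This preimage contains $\omega_0$ and so is not empty. Hence it equals $\Omega$, which means $f\equiv x$.

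If one prefers to work coordinatewise, the same argument applies to each real component $f^i$, using the preimage $\{f^i=f^i(\omega_0)\}$, or equivalently the sets $\{f^i\le f^i(\omega_0)\}$ and $\{f^i\ge f^i(\omega_0)\}$. Each of these preimages contains $\omega_0$, so each equals $\Omega$.

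No step is a real obstacle. The only points to state explicitly are that nonemptiness of $\Omega$ is used to pick $\omega_0$, and that singletons are measurable in the target space. Without nonemptiness the statement would be vacuous or false, since the empty function has no constant value to exhibit.
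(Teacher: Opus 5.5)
Your proof is correct and is essentially the paper's own argument: the paper likewise uses nonemptiness to pick a value $y$ in the range of $f$, notes that $\{y\}$ is Borel, and concludes that the nonempty preimage $f^{-1}(\{y\})\in\{\varnothing,\Omega\}$ must equal $\Omega$. Your coordinatewise variant is not needed. Your closing aside is slightly off, since on an empty space the empty map vacuously equals any constant, but this does not affect the proof.
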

\begin{proof}
Choose $y$ in the range of the function $f$. The singleton $\{y\}$ is Borel, so its inverse image is measurable and nonempty. It must therefore equal $\Omega$, which gives $f=y$ everywhere.
\end{proof}

\begin{lemma}\label{lem:onestep}
An adapted integrable process $Y=(Y_t)_{t=0}^T$ is a martingale if and only if $\E[Y_t\mid\F_{t-1}]=Y_{t-1}$ for $1\leq t\leq T$.
\end{lemma}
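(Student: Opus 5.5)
The plan is to prove each direction of Lemma~\ref{lem:onestep} separately, taking the standard definition of martingale: $Y$ is adapted, integrable, and satisfies $\E[Y_t\mid\F_s]=Y_s$ for all $0\leq s\leq t\leq T$. Adaptedness and integrability are assumed in both directions, so only the conditional-expectation identities need attention.

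\emph{Necessity.} If $Y$ is a martingale, take $s=t-1$ in the definition. This gives $\E[Y_t\mid\F_{t-1}]=Y_{t-1}$ for each $1\leq t\leq T$, so this direction is immediate.

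\emph{Sufficiency.} Assume the one-step identity. Fix $s$ and prove $\E[Y_t\mid\F_s]=Y_s$ by induction on $t\geq s$.
\begin{itemize}
\item For $t=s$ the identity is $\E[Y_s\mid\F_s]=Y_s$, which holds because $Y_s$ is $\F_s$-measurable and integrable.
\item For the inductive step, assume the identity for some $t$ with $s\leq t<T$. Since $\F_s\subseteq\F_t$, the tower property gives
\[
 \E[Y_{t+1}\mid\F_s]=\E\bigl[\E[Y_{t+1}\mid\F_t]\bigm|\F_s\bigr]=\E[Y_t\mid\F_s]=Y_s,
\]
where the middle equality uses the one-step hypothesis and the last uses the induction hypothesis.
\end{itemize}
These identities hold almost surely, in the usual sense of conditional expectations on a general probability space. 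In the finite, full-support setting of the main text they hold at every state.

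No step is a real obstacle. The only care needed is to apply the tower property in the right direction, namely $\F_s\subseteq\F_t$, and to check integrability so that every conditional expectation is defined. Integrability is part of the hypothesis, and each $Y_t$ is integrable, so all the conditional expectations used above exist.
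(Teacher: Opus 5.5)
Your proof is correct and is essentially the paper's argument. The paper writes $\E[Y_t\mid\F_s]$ as $Y_s$ plus a telescoping sum of terms $\E\bigl[\E[Y_j-Y_{j-1}\mid\F_{j-1}]\mid\F_s\bigr]$, each zero by the one-step hypothesis, which is the same iterated use of the tower property that your induction on $t$ carries out one step at a time.
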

\begin{proof}
Necessity follows from the definition. Conversely, for $s<t$, conditional linearity and the tower property give
\[
\E[Y_t\mid\F_s]
=Y_s+\sum_{j=s+1}^t\E\bigl[\E[Y_j-Y_{j-1}\mid\F_{j-1}]\mid\F_s\bigr]
=Y_s.
\]
Every conditional expectation exists by integrability.
\end{proof}

\begin{lemma}\label{lem:optionalstopping}
For an integrable process $Y$ at finitely many dates, $Z_t=\E[Y_t\mid\F_t]$ is its discrete optional projection: for any stopping time $\tau$ with values in $\{0,\ldots,T,\infty\}$,
\[
\E\left[\sum_{t=0}^T Y_t\ind_{\{\tau=t\}}\,\middle|\,\F_\tau\right]
=\sum_{t=0}^T Z_t\ind_{\{\tau=t\}}.
\]
\end{lemma}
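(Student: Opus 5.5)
The plan is to verify the defining property of the optional projection directly, by reducing the stopped expression to a finite sum of terms localized on the events $\{\tau=t\}$. Three facts will be used throughout. First, $\tau$ takes finitely many values. Second, each $Y_t$ is integrable, so the sum $\sum_t Y_t\ind_{\{\tau=t\}}$ is integrable and its conditional expectation exists. Third, $\{\tau=t\}\in\F_t$ and $\{\tau=t\}\in\F_\tau$ for every $t$.

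The first step is to check that the right-hand side $\sum_t Z_t\ind_{\{\tau=t\}}$ is $\F_\tau$-measurable. For a Borel set $B$ and any $s$, consider
\[
\Bigl\{\textstyle\sum_t Z_t\ind_{\{\tau=t\}}\in B\Bigr\}\cap\{\tau=s\}
=\{Z_s\in B\}\cap\{\tau=s\}.
\]
This set lies in $\F_s$, because $Z_s$ is $\F_s$-measurable. Here we use the convention that on $\{\tau=\infty\}$ both sums vanish. The case $0\in B$ on $\{\tau=\infty\}$ is handled in the same way, since $\{\tau=\infty\}\in\F_\infty$ and $\{\tau=\infty\}\cap\{\tau=s\}=\varnothing$. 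This gives the measurability.

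The second step checks the averaging property against an arbitrary $A\in\F_\tau$. By linearity it suffices to show, for each fixed $t$,
\[
\E[Y_t\ind_{\{\tau=t\}}\ind_A]=\E[Z_t\ind_{\{\tau=t\}}\ind_A].
\]
Since $A\cap\{\tau=t\}\in\F_t$ by the definition of $\F_\tau$, this identity is exactly the defining property of $Z_t=\E[Y_t\mid\F_t]$ tested on that event. Summing over $t$ gives the claim, and uniqueness of conditional expectation up to null sets completes the argument.

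The only delicate point, and the one I would state explicitly, is the treatment of $\{\tau=\infty\}$ together with the meaning of $\F_\tau$ when $\tau$ can be infinite. I would take $\F_\tau=\{A: A\cap\{\tau=t\}\in\F_t \text{ for all } t\leq T\}$. Both sides carry no mass on $\{\tau=\infty\}$, so nothing further is required there. Everything else is bookkeeping.
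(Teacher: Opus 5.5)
Your proof is correct and is essentially the paper's argument. Like the paper, you split on the events $\{\tau=t\}$, use $A\cap\{\tau=t\}\in\F_t$ together with the defining property of $\E[Y_t\mid\F_t]$, and check that $\sum_t Z_t\ind_{\{\tau=t\}}$ is $\F_\tau$-measurable and integrable.

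One small repair: your definition of $\F_\tau$ should also require $A\in\F_\infty$ (for instance $\F_\infty=\F_T$). Without it, $\F_\tau$ admits arbitrary, possibly non-measurable, subsets of $\{\tau=\infty\}$, and conditioning on it is not defined.
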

\begin{proof}
For $A\in\F_\tau$, each $A\cap\{\tau=t\}$ belongs to $\F_t$. Thus
\[
\E\left[\ind_A\sum_{t=0}^T Y_t\ind_{\{\tau=t\}}\right]
=\sum_{t=0}^T\E[Y_t\ind_{A\cap\{\tau=t\}}]
=\sum_{t=0}^T\E[Z_t\ind_{A\cap\{\tau=t\}}].
\]
The sum on the right in the asserted identity is $\F_\tau$-measurable and integrable. This proves the conditional-expectation identity. Uniqueness follows by choosing each deterministic $\tau=t$. This argument uses the decomposition over finitely many dates. The event $\{\tau<\infty\}$ is also $\F_\tau$-measurable in continuous time; its measurability is not a distinction between the two settings.
\end{proof}

\begin{proposition}\label{prop:integration}
Let $Y$ be an adapted real process with $Y_0\in L^1$. Then $Y$ is a martingale if and only if $H\cdot Y$ is a martingale for every bounded predictable $H$.
\end{proposition}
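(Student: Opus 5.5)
Both directions reduce to the one-step characterization in Lemma~\ref{lem:onestep}, applied to $Y$ and to $H\cdot Y$. Here $\F_t$ is a general filtration on a possibly infinite probability space. ``Martingale'' includes integrability of every term, and $H$ is bounded and predictable, so $H_t$ is $\F_{t-1}$-measurable with $|H_t|\le C$.

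\emph{Necessity.} Assume $Y$ is a martingale. Then each $Y_t\in L^1$, so $H_t\Delta Y_t$ is integrable because $|H_t\Delta Y_t|\le C(|Y_t|+|Y_{t-1}|)$. Hence $(H\cdot Y)_t=\sum_{s\le t}H_s\Delta Y_s$ is adapted and integrable. Since $H_t$ is bounded and $\F_{t-1}$-measurable, we may take out what is known:
\[
\E[H_t\Delta Y_t\mid\F_{t-1}]=H_t\,\E[\Delta Y_t\mid\F_{t-1}]=0.
\]
Lemma~\ref{lem:onestep} then shows that $H\cdot Y$ is a martingale.

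\emph{Sufficiency.} This is the step needing care: integrability of $Y_t$ is not assumed beyond $t=0$, so it must be extracted from the hypothesis.
\begin{itemize}
\item First take $H_s=\ind_{\{s\le t\}}$, which is deterministic, bounded and predictable. Then $(H\cdot Y)_t=Y_t-Y_0$ is integrable because $H\cdot Y$ is a martingale, and $Y_0\in L^1$ gives $Y_t\in L^1$ for every $t$.
\item Next fix $t$ and $A\in\F_{t-1}$, and use $H_s=\ind_A\ind_{\{s=t\}}$, which is bounded and predictable. The martingale $H\cdot Y$ starts at $0$, so
\[
\E[\ind_A(Y_t-Y_{t-1})]=\E[(H\cdot Y)_T]=0.
\]
\item Since $A\in\F_{t-1}$ is arbitrary and $Y_t-Y_{t-1}$ is integrable, this gives $\E[\Delta Y_t\mid\F_{t-1}]=0$.
\end{itemize}
Lemma~\ref{lem:onestep} then shows that $Y$ is a martingale.

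The main obstacle is that integrability point. It is resolved by the deterministic test strategies together with the assumption $Y_0\in L^1$, which is exactly why that assumption appears in the statement. If $Y$ is vector-valued, the same argument is applied componentwise, testing with $H$ supported in a single coordinate.
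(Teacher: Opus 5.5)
Your proof is correct and follows essentially the paper's route: both directions reduce to Lemma~\ref{lem:onestep}, and integrability of $Y_t$ is extracted from a deterministic test strategy together with $Y_0\in L^1$. The only difference is minor. The paper tests with the deterministic $H_s=\ind_{\{s=j\}}$ and reads off $\E[\Delta Y_j\mid\F_{j-1}]=0$ directly from the conditional martingale property of $H\cdot Y$, whereas you localize with $\ind_A$, $A\in\F_{t-1}$, and use only zero expected terminal gain, as in Lemma~\ref{lem:annihilator}.
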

\begin{proof}
If $Y$ is a martingale, boundedness of $H$ and the finite horizon give integrability of its stochastic integral. Adaptedness and predictability yield
\[
\E[(H\cdot Y)_t\mid\F_{t-1}]
=(H\cdot Y)_{t-1}+H_t\E[Y_t-Y_{t-1}\mid\F_{t-1}]
=(H\cdot Y)_{t-1}.
\]
Lemma~\ref{lem:onestep} applies. Conversely, take $H\equiv1$. Since $(H\cdot Y)_t=Y_t-Y_0$ is integrable, so is $Y_t$. For a fixed $j\geq1$, take $H_t=\ind_{\{t=j\}}$. Its integral is zero at $j-1$ and equals $Y_j-Y_{j-1}$ at $j$. The martingale property gives $\E[Y_j-Y_{j-1}\mid\F_{j-1}]=0$. Apply Lemma~\ref{lem:onestep} again. For vector processes, the same argument is applied to each coordinate.
\end{proof}

\subsection{Self-financing identities}
Let $P_t=(B_t,S_t)$ be adapted, with deterministic $B_t>0$, $B_0=1$, and deterministic $P_0$. For predictable $\varphi_t=(\beta_t,H_t)$, define $V_t=\varphi_{t+1}\cdot P_t$ for $t<T$ and $V_T=\varphi_T\cdot P_T$.

\begin{proposition}\label{prop:portfolioidentities}
The self-financing conditions $\varphi_t\cdot P_t=\varphi_{t+1}\cdot P_t$ for $1\leq t<T$ are equivalent to
\begin{equation}\label{eq:undiscountedgains}
V_t=V_0+\sum_{s=1}^t\varphi_s\cdot(P_s-P_{s-1}),\qquad 1\leq t\leq T.
\end{equation}
They are also equivalent to the same self-financing conditions with $P_t$ replaced by $(1,\sbar_t)$. For a self-financing strategy,
\begin{equation}\label{eq:discountedgainsapp}
V_t/B_t=V_0+\sum_{s=1}^t H_s\cdot\Delta\sbar_s.
\end{equation}
Every predictable risky strategy $H$ has exactly one predictable self-financing cash completion for a specified initial capital.
\end{proposition}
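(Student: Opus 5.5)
The argument is a direct algebraic verification built around the predictable cash completion $\beta_t=V_{t-1}-H_t\cdot S_{t-1}\cdot$(appropriately discounted). I would write $\Delta P_s=P_s-P_{s-1}$ and start from the telescoping identity
\[
V_t-V_{t-1}=\varphi_{t+1}\cdot P_t-\varphi_t\cdot P_{t-1}
=\varphi_t\cdot\Delta P_t+(\varphi_{t+1}-\varphi_t)\cdot P_t,
\]
for $1\le t<T$. The case $t=T$ reads simply $V_T-V_{T-1}=\varphi_T\cdot\Delta P_T$, because $V_T=\varphi_T\cdot P_T$ and $V_{T-1}=\varphi_T\cdot P_{T-1}$.

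\emph{Equivalence with the gains representation \eqref{eq:undiscountedgains}.} By the identity above, the self-financing condition at date $t$ is equivalent to $V_t-V_{t-1}=\varphi_t\cdot\Delta P_t$. Summing over $s\le t$ gives \eqref{eq:undiscountedgains}. Conversely, differencing \eqref{eq:undiscountedgains} at consecutive dates recovers the increment identity, and hence the self-financing condition, for each $t<T$. The statement at $t=T$ holds automatically, so it imposes nothing extra.

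\emph{Discounted version.} Since $B_t>0$, the condition $\varphi_t\cdot P_t=\varphi_{t+1}\cdot P_t$ is equivalent to the same identity divided by $B_t$. Dividing turns $P_t$ into $(1,\sbar_t)$, so the self-financing conditions are unchanged under this replacement. Applying the first part to the price vector $(1,\sbar_t)$, whose first coordinate has zero increments, the discounted values $V_t/B_t$ satisfy
\[
V_t/B_t=V_0/B_0+\sum_{s\le t}H_s\cdot\Delta\sbar_s .
\]
With $B_0=1$ this is \eqref{eq:discountedgainsapp}. Here I must check that the discounted value built from $(1,\sbar)$ is exactly $V_t/B_t$; this holds because $\varphi_{t+1}\cdot(1,\sbar_t)=V_t/B_t$.

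\emph{Existence and uniqueness of the cash completion.} Given $H$ and initial capital $c$, set $W_t=c+\sum_{s\le t}H_s\cdot\Delta\sbar_s$ and $\beta_t=W_{t-1}-H_t\cdot\sbar_{t-1}$. This $\beta_t$ is $\F_{t-1}$-measurable because $S$ is adapted here, which is exactly the point stressed in \eqref{eq:cash}. Self-financing with respect to $(1,\sbar)$ then follows as in \eqref{eq:selffinance}, and by the second part it transfers to $P$.

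For uniqueness, any self-financing completion $\beta'$ with the same $V_0=c$ has, by \eqref{eq:discountedgainsapp}, the same discounted values $W_t$. Evaluating $V_{t-1}/B_{t-1}=\beta'_t+H_t\cdot\sbar_{t-1}$ then forces $\beta'_t=\beta_t$. The initial capital should be read as $V_0=\varphi_1\cdot P_0$, which also pins down $\beta_1$.

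The only delicate step is bookkeeping at the endpoints, namely the convention for $V_T$ and the fact that $V_0$ is defined through $\varphi_1$. This matters so that the equivalences hold exactly for $1\le t<T$ with no spurious condition at $T$.
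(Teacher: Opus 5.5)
Your proof is correct and follows essentially the same route as the paper: you use the increment identity $V_t-V_{t-1}=\varphi_t\cdot(P_t-P_{t-1})$ under self-financing, divide by $B_t>0$, and obtain a cash completion forced by the initial capital through the discounted gains, with predictability coming from adaptedness of $S$. The differences are cosmetic. You get \eqref{eq:discountedgainsapp} by reapplying the first part to $(1,\overline S_t)$ and write the cash in the closed form $\beta_t=W_{t-1}-H_t\cdot\overline S_{t-1}$, whereas the paper computes the discounted increments directly and uses the equivalent recursion $\beta_{t+1}=\beta_t+(H_t-H_{t+1})\cdot\overline S_t$.
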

\begin{proof}
For $s<T$, self-financing gives
\[
V_s-V_{s-1}=\varphi_{s+1}\cdot P_s-\varphi_s\cdot P_{s-1}
=\varphi_s\cdot(P_s-P_{s-1}).
\]
The same equality for $s=T$ follows directly from the terminal definition. Summing proves~\eqref{eq:undiscountedgains}. Conversely, subtract that identity at consecutive dates $t$ and $t-1$, for $t<T$, and substitute the definition of $V$ to obtain self-financing.

Dividing $\varphi_t\cdot P_t=\varphi_{t+1}\cdot P_t$ by the positive number $B_t$ proves the discounted equivalence. For each $s$, using the holdings carried over $(s-1,s]$ gives
\[
\frac{V_s}{B_s}-\frac{V_{s-1}}{B_{s-1}}
=\beta_s+H_s\cdot\sbar_s-\beta_s-H_s\cdot\sbar_{s-1}
=H_s\cdot\Delta\sbar_s.
\]
Summation proves~\eqref{eq:discountedgainsapp}. Finally, for initial capital $c$, necessarily
\[
\beta_1=c-H_1\cdot S_0,\qquad
\beta_{t+1}=\beta_t+(H_t-H_{t+1})\cdot\sbar_t.
\]
These equations uniquely define the cash holdings and ensure self-financing. Adaptedness of $S$ makes $\beta_{t+1}$ $\F_t$-measurable by induction. This last conclusion is precisely what needs rechecking if $S$ is nonadapted.
\end{proof}

An immediate consequence is the value-process characterization of optional martingale measures. On the finite space of Section~\ref{sec:setup}, fix $Q\sim P$ and form its own projection $X^Q$. Then $Q\in\M_S$ if and only if every discounted self-financing wealth process in the market with risky prices $B_tX_t^Q$ is a $Q$-martingale. Indeed, Proposition~\ref{prop:portfolioidentities} writes every such process as $c+H\cdot X^Q$. Proposition~\ref{prop:integration} proves one direction. For the other, any predictable $H$ has the cash completion just proved, so the assumed wealth property applies to every stochastic integral and gives the martingale property of $X^Q$. All strategies here are bounded because $\Omega$ is finite.

\subsection{Equivalent measures and one-period arbitrage}
\begin{lemma}
Whether a fixed strategy is an arbitrage does not change when the physical probability measure is replaced by an equivalent measure. Discounting by a strictly positive numeraire also preserves the arbitrage inequalities for a zero-cost strategy.
\end{lemma}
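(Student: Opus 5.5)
The plan is to treat the two claims of the lemma separately; both come straight from the definitions once the objects are fixed. Fix a strategy $\varphi$ with initial value $V_0$, and let $\tilde P\sim P$ be an equivalent measure. By the definition used in the framework of~\cite{BR2017}, $\varphi$ is an arbitrage under $P$ when it is self-financing, $V_0\leq 0$, $V_T\geq0$ $P$-a.s., and $P(V_T>0)>0$. Self-financing and the value $V_0$ are pathwise or deterministic conditions, so they do not involve the measure at all. Only the inequalities do, and these are statements about null sets: $V_T\geq0$ a.s.\ means $P(V_T<0)=0$, and the strict part means $P(V_T>0)>0$.

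\textbf{Equivalent measures.} Since $\tilde P$ and $P$ have the same null sets, $P(V_T<0)=0$ holds if and only if $\tilde P(V_T<0)=0$. Likewise $P(V_T>0)>0$ holds if and only if $\tilde P(V_T>0)>0$, because the event $\{V_T>0\}$ is null for one measure exactly when it is null for the other. So the arbitrage conditions, and hence the arbitrage property of $\varphi$, are identical under the two measures. In the finite setting of Section~\ref{sec:setup} this is just the statement that both measures charge every state.

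\textbf{Discounting.} Let $N_t>0$ be the strictly positive numeraire; here $B_t$ is deterministic, but I will argue pathwise so that random $N$ also works. For a zero-cost strategy, $V_0=0$ gives $V_0/N_0=0$. Because $N_T(\omega)>0$ for every $\omega$, the events $\{V_T\geq0\}$ and $\{V_T/N_T\geq0\}$ coincide, and so do $\{V_T>0\}$ and $\{V_T/N_T>0\}$. The almost-sure inequality and the positive-probability condition therefore transfer in both directions. If the arbitrage is formulated through discounted gains, I will invoke Proposition~\ref{prop:portfolioidentities}: equation~\eqref{eq:discountedgainsapp} identifies $V_T/B_T$ with $V_0+\sum_s H_s\cdot\Delta\sbar_s$, so the discounted formulation is the same inequality read after division by $B_T$.

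\textbf{Main obstacle.} The only delicate point is definitional, not technical: I must state explicitly which arbitrage convention is in use (nonpositive versus zero initial cost) and check that the discounting statement is restricted to zero cost, as the lemma says. For a strictly negative $V_0$ with a random numeraire, discounted and undiscounted initial values still have the same sign, so nothing breaks, but I will keep to the stated zero-cost case.
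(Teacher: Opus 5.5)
Your proposal is correct and follows the paper's argument essentially line for line. Equivalent measures share null sets, so both the almost-sure inequality and the positive-probability condition transfer. The self-financing equations and the initial cost are pathwise. Division by the strictly positive $B_T$ preserves signs, and in the zero-cost case \eqref{eq:discountedgainsapp} identifies $V_T/B_T$ with the discounted gain.

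Your pathwise treatment of a possibly random numeraire is extra generality that the paper does not need, since $B_t$ is deterministic there.
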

\begin{proof}
Equivalent measures have exactly the same null sets. Thus $V_T\geq0$ almost surely and $P(V_T>0)>0$ hold for one precisely when they hold for the other; the pathwise self-financing equations and initial cost are unchanged. Also $V_T$ and $V_T/B_T$ have the same sign. For zero initial capital,~\eqref{eq:discountedgainsapp} identifies the latter with the discounted gain. No invariance of martingale properties under equivalent changes of measure is asserted.
\end{proof}

\begin{proposition}\label{prop:onestepNA}
For adapted prices and predictable self-financing trading, an arbitrage over $T$ dates exists if and only if, for some $t$, there is an $\F_{t-1}$-measurable vector $\eta$ such that $\eta\cdot\Delta\sbar_t\geq0$ almost surely and is strictly positive with positive probability.
\end{proposition}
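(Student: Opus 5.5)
The statement is the classical reduction of multi-period arbitrage to one-period arbitrage. The plan is to prove the two directions separately, working throughout with discounted gains $(H\cdot\sbar)_T$. By Proposition~\ref{prop:portfolioidentities} and the preceding lemma, a zero-cost self-financing arbitrage is equivalent to a predictable $H$ with $(H\cdot\sbar)_T\geq0$ almost surely and $P\bigl((H\cdot\sbar)_T>0\bigr)>0$. Discounting and the cash completion do not affect this, because prices are adapted here.

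\emph{Sufficiency.} Given $\eta$ that is $\F_{t-1}$-measurable with $\eta\cdot\Delta\sbar_t\geq0$ and strictly positive with positive probability, define $H_s=\eta\ind_{\{s=t\}}$. This $H$ is predictable, and its gain is $(H\cdot\sbar)_T=\eta\cdot\Delta\sbar_t$. Completing it with the unique predictable cash holding from Proposition~\ref{prop:portfolioidentities}, with zero initial capital, gives an arbitrage. If the general setting requires boundedness, I would first replace $\eta$ by $\eta\ind_{\{|\eta|\leq n\}}$ for large $n$. Positivity on a set of positive probability survives this truncation by continuity of measure.

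\emph{Necessity.} Let $H$ be an arbitrage, write $G_t=(H\cdot\sbar)_t$, and let $t$ be the smallest date such that $G_t\geq0$ almost surely and $P(G_t>0)>0$; such a $t$ exists because $t=T$ qualifies. There are two cases.
\begin{enumerate}
\item If $G_{t-1}=0$ almost surely, then $\eta=H_t$ works, since $\eta\cdot\Delta\sbar_t=G_t-G_{t-1}=G_t$.
\item Otherwise, minimality of $t$ shows that $G_{t-1}\geq0$ a.s.\ fails, because $G_{t-1}\geq0$ together with $G_{t-1}\ne0$ would already make $t-1$ qualify. Hence $A=\{G_{t-1}<0\}\in\F_{t-1}$ has positive probability. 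Set $\eta=H_t\ind_A$. On $A$ we have $\eta\cdot\Delta\sbar_t=G_t-G_{t-1}\geq -G_{t-1}>0$, and off $A$ the product is zero.
\end{enumerate}
In both cases $\eta$ is $\F_{t-1}$-measurable with the required sign properties.

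The main point to handle carefully is this case split, which is the standard argument from~\cite{BR2017}. In particular, the dates $t$ with $G_t=0$ almost surely must be treated correctly, and the choice of the minimal date is what makes the argument work. On the finite space no integrability issues arise. On a general space the one-period arbitrage condition is a pathwise inequality, so the argument needs no integrability either, apart from the truncation noted above if bounded strategies are demanded.
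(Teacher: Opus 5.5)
Your proposal is correct and follows essentially the same route as the paper: take the minimal date $t$, split on whether $W_{t-1}$ vanishes almost surely, otherwise set $\eta=\ind_A H_t$ with $A=\{W_{t-1}<0\}$, and for the converse hold $\eta$ only over $(t-1,t]$ with the zero-capital cash completion of Proposition~\ref{prop:portfolioidentities}. Your truncation remark for bounded strategies is a harmless extra that the paper's unconstrained setting does not need.
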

\begin{proof}
Let $W=V/B$ be the wealth of an arbitrage, with $W_0=0$. Choose the smallest deterministic $t\in\{1,\ldots,T\}$ for which $W_t\geq0$ almost surely and $P(W_t>0)>0$. If $W_{t-1}=0$ almost surely, then $\eta=H_t$ has the required property. Otherwise minimality implies $P(W_{t-1}<0)>0$: a nonzero nonnegative $W_{t-1}$ would already have the defining property. Put
\[
A=\{W_{t-1}<0\}\in\F_{t-1},\qquad \eta=\ind_AH_t.
\]
Then $\eta\cdot\Delta\sbar_t=\ind_A(W_t-W_{t-1})\geq0$, and it is strictly positive on $A$.

Conversely, hold only the risky position $\eta$ during $(t-1,t]$ and no risky asset at the other dates. Start from zero capital and complete the strategy by Proposition~\ref{prop:portfolioidentities}. Its discounted terminal wealth is exactly $\eta\cdot\Delta\sbar_t$. The cash position after date $t$ retains this gain until maturity. The required measurability follows from adaptedness. This proves both implications, including the financing step.
\end{proof}

%% file: first_ftap.tex
\section{The first fundamental theorem and its auxiliary proofs}\label{app:first}
The first fundamental theorem of Kabanov and Stricker~\cite{KS2006} follows from the selection and separation arguments of~\cite{KS2001}. Unlike the completeness results in the main text, the probability space here is arbitrary. The induction and integrability steps are written out explicitly.

Fix a probability space $(\Omega,\F,P)$, a finite horizon $T$, a filtration $\mathbb F$ with $\F_T\subseteq\F$, and $\F$-measurable $\R^d$-valued random vectors $U_t$ that are finite almost surely, not necessarily adapted. In a financial application, $U=\sbar$. Put
\[
\mathcal R=\left\{\sum_{t=1}^T H_t\cdot\Delta U_t:H_t\text{ is }\F_{t-1}\text{-measurable}\right\},
\qquad \mathcal C=\mathcal R-L^0_+.
\]
Positions are finite random vectors; they need not be bounded. Closures in $L^0$ refer to convergence in probability.

\begin{theorem}[Kabanov and Stricker]\label{thm:KSgeneral}
The following conditions are equivalent:
\begin{enumerate}[label=(\roman*)]
\item $\mathcal C\cap L^0_+=\{0\}$.
\item $\mathcal C\cap L^0_+=\{0\}$ and $\mathcal C$ is closed in probability.
\item $\overline{\mathcal C}\cap L^0_+=\{0\}$.
\item There is $Q\sim P$ with $dQ/dP\in L^\infty(P)$, with every $U_t\in L^1(Q)$, such that $\E_Q[\Delta U_t\mid\F_{t-1}]=0$ for all $t$.
\end{enumerate}
\end{theorem}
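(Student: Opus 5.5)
The plan follows the Kabanov--Stricker/Dalang--Morton--Willinger scheme with the nonadapted increments, proving (iv)$\Rightarrow$(i), (ii)$\Rightarrow$(i) and (iii)$\Rightarrow$(i) directly, then (i)$\Rightarrow$(ii), (ii)$\Rightarrow$(iii) and (iii)$\Rightarrow$(iv).

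\textbf{The easy implications.} For (iv)$\Rightarrow$(i), take $\xi=\sum_t H_t\cdot\Delta U_t-r\geq0$. Since $H_t$ may be unbounded, truncate with $A_n=\{|H_t|\le n\ \forall t\}\in\F_{T-1}$ is not available. Instead I would argue step by step via generalized conditional expectations: on $\{\xi\ge0\}$ one shows by backward induction that $\E_Q[H_T\cdot\Delta U_T\mid\F_{T-1}]=0$ in the generalized sense, with $H_T\ind_{\{|H_T|\le n\}}$ bounded. Then $\E_Q[\ind_{\{|H_t|\le n\}}H_t\cdot\Delta U_t\mid\F_{t-1}]=0$ because $\Delta U_t\in L^1(Q)$. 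This yields $\E_Q[\xi]\le0$ after the standard lemma that a gain bounded below has generalized conditional expectation zero. Hence $\xi=0$. The implications (ii)$\Rightarrow$(iii)$\Rightarrow$(i) are trivial.

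\textbf{(iii)$\Rightarrow$(iv).} This is the Kreps--Yan step. $\overline{\mathcal C}\cap L^1(\tilde P)$ is a closed convex cone in $L^1(\tilde P)$, where $\tilde P\sim P$ has a bounded density chosen so that every $U_t\in L^1(\tilde P)$; for example $d\tilde P/dP=c\,e^{-\sum_t|U_t|}$. The cone contains $-L^1_+$ and meets $L^1_+$ only in $0$. Yan's theorem then gives $Z\in L^\infty$ with $Z>0$ and $\E_{\tilde P}[Z\xi]\le0$ on the cone. Set $Q=Z\tilde P/\E_{\tilde P}Z$; its density with respect to $P$ is bounded. Testing with $\pm\ind_A e_i\cdot\Delta U_t$, $A\in\F_{t-1}$, which lie in $L^1(\tilde P)\cap\mathcal R$, gives $\E_Q[\Delta U_t\mid\F_{t-1}]=0$.

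\textbf{(i)$\Rightarrow$(ii), closedness.} This is the main obstacle. The key reduction is that $H_t\cdot\Delta U_t$ depends on $H_t$ only through $H_t\cdot\xi_t$, where $\xi_t=\E[\Delta U_t\mid\F_{t-1}]$-type objects fail because $\Delta U_t$ is not $\F_t$-measurable. So I would work directly with the Kabanov--Stricker procedure.

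First, replace $H_t$ by its projection onto the $\F_{t-1}$-measurable orthogonal complement of the random subspace
\[
N_t=\{\eta\ \F_{t-1}\text{-measurable}:\eta\cdot\Delta U_t=0\text{ a.s.}\}.
\]
Its $\omega$-wise version comes from a measurable selection: the closed random subspace $L_t(\omega)$ is characterized by the regular conditional distribution of $\Delta U_t$ given $\F_{t-1}$. Now take $\sum_t H^n_t\cdot\Delta U_t-r^n\to\zeta$ a.s., with normalized $H^n$. By induction over $t$, I would show that $\liminf|H^n_t|<\infty$ a.s. If not, normalize on $\{\liminf|H^n_t|=\infty\}$ and use the random subsequence lemma (measurable selection of convergent subsequences of $\F_{t-1}$-measurable sequences). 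This produces $\tilde H$ with $\sum\tilde H\cdot\Delta U\ge0$, hence zero by (i), and with $\tilde H_t$ orthogonal to $N_t$ but lying in it — a contradiction on a non-null set. Then pass to the limit along random subsequences to get $\zeta\in\mathcal C$.

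The delicate point is precisely that $\{\tilde H_t\cdot\Delta U_t=0\}$ must be turned into $\tilde H_t\in N_t$ after first showing each single-period gain vanishes. I would do this by backward induction from $T$. Applying (i) to strategies supported on $\{$gain of the last steps negative$\}\in\F_{t-1}$ will fail because of nonadaptedness, so I would instead use (i) on the full sum together with the total-zero conclusion, and peel off periods through the orthogonality normalization.
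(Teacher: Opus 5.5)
\textbf{Gap in (i)$\Rightarrow$(ii).} Your (ii)$\Rightarrow$(iii)$\Rightarrow$(iv) agrees with the paper: the $e^{-Z}$ change of measure followed by Kreps--Yan separation in $L^1$. The closedness step (i)$\Rightarrow$(ii), however, has a genuine gap. Per-period normalization against $N_t=\{\eta:\eta\cdot\Delta U_t=0\}$ gives a contradiction only if a normalized limit strategy with zero \emph{total} gain also has zero gain in \emph{each} period. Without adaptedness this is false; it is exactly where the adapted proof relies on one-period no-arbitrage. A concrete counterexample: take $d=1$, $T=2$, $\F_0=\F_1$ trivial, $X=\pm1$ with positive probabilities, $\Delta U_1=X$ and $\Delta U_2=-X$. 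Then $\mathcal R=\{(h_1-h_2)X\}$, so (i) holds, and $N_1=N_2=\{0\}$. Yet $H^n=(n,n)$ has gain identically $0$ while $|H^n_1|\to\infty$. The normalized limit $(1,1)$ has period gains $\pm X\ne0$, so there is no contradiction. Your intermediate claim $\liminf_n|H^n_t|<\infty$ is therefore false, and ``peeling off periods'' through per-period orthogonality cannot repair it.

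The missing idea is to normalize against the null space of the \emph{whole} gain map. The paper inducts on $T$ by splitting off the \emph{first} date and takes closedness of the cone generated by dates $2,\dots,T$ as the induction hypothesis. It divides only by the $\F_0$-measurable $|H^n_1|$, so all later positions stay predictable. From (i) it obtains a full strategy $G$ with $|G_1|=1$ and $\sum_tG_t\cdot\Delta U_t=0$. It then replaces $H^n$ by $H^n-b_nG$, where $b_n=H^{n,i}_1/G^i_1$ is $\F_0$-measurable. This preserves the terminal gain and predictability and kills one coordinate of $H^n_1$. After at most $d$ rounds, the case $\liminf|H^n_1|<\infty$ remains, and random subsequence selection plus the induction hypothesis finish the proof.

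\textbf{Gap in (iv)$\Rightarrow$(i).} This argument is not a proof as written. The truncation you dismiss is available: every $H_t$ with $t\le T$ is $\F_{T-1}$-measurable, so $A_m=\{\max_t|H_t|\le m\}\in\F_{T-1}$, which is exactly what the paper uses. The ``standard lemma'' that a gain bounded below has zero generalized conditional expectation needs adapted partial sums. Here $\sum_{s\le t}H_s\cdot\Delta U_s$ is not $\F_t$-measurable, so your backward induction cannot start. The device you need is to condition the nonnegative $Z=\sum_tH_t\cdot\Delta U_t$ on $\F_{T-1}$, which gives
\[
\E_Q[Z\mid\F_{T-1}]=\sum_{t<T}H_t\cdot D'_t\ge0,\qquad D'_t=\E_Q[\Delta U_t\mid\F_{T-1}].
\]
The $D'_t$ are integrable and still satisfy $\E_Q[D'_t\mid\F_{t-1}]=0$ by the tower property. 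Induction on $T$ then shows this conditional expectation vanishes, hence $\E_QZ=0$ and $Z=0$.
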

The last condition is equivalent to the martingale property of $\E_Q[U_t\mid\F_t]$ by the tower property. Also $\mathcal C\cap L^0_+=\{0\}$ is equivalent to $\mathcal R\cap L^0_+=\{0\}$: if $0\leq f=g-r$ with $g\in\mathcal R$ and $r\geq0$, then $g=f+r\geq0$, so absence of nonzero positive gains forces both $g$ and $f$ to vanish.

\subsection{A measurable subsequence}
\begin{lemma}\label{lem:selection}
Let $\eta^n$ be $\mathcal A$-measurable $\R^d$-valued random vectors with $\liminf_n|\eta^n|<\infty$ almost surely. There are increasing finite $\mathcal A$-measurable integer-valued indices $\tau_j$ such that $\eta^{\tau_j}$ converges almost surely.
\end{lemma}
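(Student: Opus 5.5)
The plan is the standard measurable-subsequence construction of Kabanov and Stricker, carried out in the scalar case first. It then extends to $\R^d$ coordinatewise, composing the successive random index sequences.

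\emph{Scalar case.} Assume first $d=1$. On the event $\Omega_0=\{\liminf_n|\eta^n|<\infty\}$, which is $\mathcal A$-measurable, define
\[
\ell=\liminf_n\eta^n .
\]
Since $\liminf|\eta^n|<\infty$, at least one of $\liminf\eta^n$ and $-\limsup\eta^n$ is finite. To avoid sign cases, first replace $\eta^n$ by a version where $\ell$ is finite: work with the finite $\mathcal A$-measurable target $\zeta$ equal to $\liminf\eta^n$ where this is finite, and to $\limsup\eta^n$ otherwise. In either case $\zeta$ is a finite cluster point of $(\eta^n(\omega))$. On the null complement of $\Omega_0$ set $\zeta=0$ and all indices $\tau_j=j$.

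\emph{Recursive indices.} Define recursively $\tau_0=0$ and
\[
\tau_j=\min\{n>\tau_{j-1}:|\eta^n-\zeta|<1/j\}.
\]
Since $\zeta(\omega)$ is a cluster point, this set is nonempty, so $\tau_j$ is finite and strictly increasing. Measurability holds by induction, because
\[
\{\tau_j=n\}=\bigcup_{m<n}\{\tau_{j-1}=m\}\cap\bigcap_{m<i<n}\{|\eta^i-\zeta|\ge1/j\}\cap\{|\eta^n-\zeta|<1/j\}.
\]
This is a countable Boolean combination of $\mathcal A$-sets. Hence $\eta^{\tau_j}=\sum_n\eta^n\ind_{\{\tau_j=n\}}$ is $\mathcal A$-measurable and converges to $\zeta$.

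\emph{Vector case.} For $d>1$, the coordinates are not handled independently; boundedness of the whole subsequence must be preserved. Apply the scalar construction to the norm first, using the cluster point $\liminf|\eta^n|$. This yields random indices $\sigma_j$ with $|\eta^{\sigma_j}|$ convergent and hence bounded. The sequence $\tilde\eta^j=\eta^{\sigma_j}$ is then $\mathcal A$-measurable with bounded coordinates. Apply the scalar step to the first coordinate of $\tilde\eta$, obtaining random indices $\rho_j$, and note that
\[
\tilde\eta^{\rho_j}=\eta^{\sigma_{\rho_j}} .
\]
The composition $\sigma_{\rho_j}$ is again increasing, finite and $\mathcal A$-measurable, since $\{\sigma_{\rho_j}=n\}=\bigcup_m\{\rho_j=m\}\cap\{\sigma_m=n\}$. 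Repeat for each of the $d$ coordinates. Convergence of earlier coordinates persists along further subsequences, so after $d$ steps all coordinates converge.

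\emph{Main obstacle.} The substantive point is the measurability of the random indices and of their compositions. This is handled by the countable-union representations above together with the observation that cluster points such as $\liminf$ and $\limsup$ are $\mathcal A$-measurable extended-real variables.
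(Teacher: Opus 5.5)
Your route is the paper's route. You first run a first-index selection that drives $|\eta^n|$ to the finite target $\liminf_n|\eta^n|$, which makes the selected sequence pointwise bounded. You then make successive coordinatewise selections towards the coordinatewise $\liminf$ and compose the random indices. Your explicit formulas for $\{\tau_j=n\}$ and $\{\sigma_{\rho_j}=n\}$ are correct and fill in what the paper only asserts.

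There is, however, a false step in your stand-alone scalar case, and as written that is how you prove the lemma for $d=1$. You claim that $\liminf_n|\eta^n|<\infty$ forces one of $\liminf_n\eta^n$ and $-\limsup_n\eta^n$ to be finite. This is wrong. The sequence $\eta^{3m}=0$, $\eta^{3m+1}=m$, $\eta^{3m+2}=-m$ has $\liminf_n|\eta^n|=0$, yet $\liminf_n\eta^n=-\infty$ and $\limsup_n\eta^n=+\infty$. On the event where this happens, which can have positive probability (even probability one, if the $\eta^n$ are deterministic), your target $\zeta$ equals $+\infty$. It is then not a finite cluster point, and the recursion defining $\tau_j$ is undefined.

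The repair is already in your vector case. Never select towards $\liminf_n\eta^n$ of a possibly unbounded sequence. Instead, first select along $|\eta^n|\to\liminf_n|\eta^n|$ (for $d=1$ this is the absolute value). Only afterwards take the coordinatewise $\liminf$ of the now pointwise bounded subsequence, which is a finite cluster point. If you run the vector argument for every $d\geq1$, including $d=1$, and drop the separate scalar case, the proof is complete and coincides with the paper's.
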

\begin{proof}
Work outside the null set where the liminf is infinite and define arbitrary values there. Let $a=\liminf_n|\eta^n|$. Starting from $\tau_0=0$, choose recursively the first $n>\tau_{j-1}$ for which $\bigl||\eta^n|-a\bigr|<1/j$. Such indices exist by the definition of liminf; they are measurable because this is a countable first-index selection from measurable events. The selected vector sequence is bounded pointwise.

For its first coordinate, let $a_1$ be the pointwise liminf and repeat the first-index selection with distance less than $1/j$ from $a_1$. This produces a subsequence whose first coordinate converges. Repeat for coordinates $2,\ldots,d$, each time taking a subsequence of the preceding sequence. Coordinatewise limits from earlier steps are preserved. After $d$ steps the vectors converge; compositions of the measurable indices are measurable. These indices need not be stopping times for the trading filtration.
\end{proof}

\subsection{Closedness of the attainable cone}
\begin{proof}[Proof of Theorem~\ref{thm:KSgeneral}, (i)$\Rightarrow$(ii)]
The proof is by induction on $T$, with the zero-period cone $-L^0_+$ as base. Write $D_t=\Delta U_t$. A sequence converging in probability has an almost surely convergent subsequence, so suppose
\[
f_n=\sum_{t=1}^T H_t^n\cdot D_t-r_n\longrightarrow f\quad\text{a.s.},\qquad r_n\geq0.
\]
The cone generated by dates $2,\ldots,T$, with the same information at those dates, also satisfies (i). By induction it is closed. Localization on an event in $\F_0$ preserves (i) and predictability, since a strategy on that event extends by zero outside it.

First work on the $\F_0$-measurable event where $\liminf_n|H_1^n|<\infty$. Lemma~\ref{lem:selection}, applied with $\mathcal A=\F_0$, allows selection of a random subsequence along which $H_1^n\to H_1$. Selecting every $H_t^n$ and $r_n$ with the same indices preserves predictability. Then
\[
\sum_{t=2}^T H_t^n\cdot D_t-r_n\longrightarrow f-H_1\cdot D_1.
\]
Closedness of the remaining-date cone gives a representation of this limit, and hence one of $f$ in $\mathcal C$.

On the complementary event $|H_1^n|\to\infty$, discard any initial zero denominators by a measurable selection and divide all positions and $r_n$ by $|H_1^n|$. Choose a further $\F_0$-measurable subsequence with
\[
G_1^n:=H_1^n/|H_1^n|\longrightarrow G_1,\qquad |G_1|=1.
\]
Since $f_n/|H_1^n|\to0$ almost surely, the normalized cone elements satisfy
\[
\sum_{t=2}^T \frac{H_t^n}{|H_1^n|}\cdot D_t-\frac{r_n}{|H_1^n|}
\longrightarrow -G_1\cdot D_1.
\]
The induction hypothesis supplies predictable $G_2,\ldots,G_T$ and $r\geq0$ such that
\[
\sum_{t=2}^T G_t\cdot D_t-r=-G_1\cdot D_1.
\]
Thus $\sum_{t=1}^T G_t\cdot D_t=r\geq0$. Condition (i) implies that this gain and $r$ are zero.

Partition the event into the $d$ measurable sets on which a specified coordinate is the first nonzero coordinate of $G_1$. On the piece with index $i$, put $b_n=H_1^{n,i}/G_1^i$ and replace all positions by
\[
\widehat H_t^n=H_t^n-b_nG_t,\qquad 1\leq t\leq T.
\]
The multipliers are $\F_0$-measurable, so predictability is preserved. The terminal gain is unchanged and $\widehat H_1^{n,i}=0$. Any coordinates of $H_1^n$ that had already been eliminated stay zero, because the corresponding coordinates of $G_1$ are also zero.

Repeat the bounded-subsequence or elimination argument on each piece. Every elimination removes one coordinate of the first position. After at most $d$ eliminations its sequence has a finite liminf and the first part of the proof applies. The partition has only finitely many pieces at each stage. Patching their predictable strategies and nonnegative remainders yields $f\in\mathcal C$. This proves closedness and completes the induction.
\end{proof}

\subsection{Separation and a strictly positive density}
\begin{lemma}[Kreps--Yan separation in $L^1$]\label{lem:separation}
Let $K$ be a norm-closed convex cone in real $L^1(P)$, containing $-L^1_+$, with $K\cap L^1_+=\{0\}$. There is a bounded density $\rho>0$ almost surely, with $\E_P\rho=1$, such that $\E_P[\rho f]\leq0$ for every $f\in K$.
\end{lemma}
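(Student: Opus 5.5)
The plan is to prove Lemma~\ref{lem:separation} by the standard Kreps--Yan route, in two stages. First we separate single points of $L^1_+$ from $K$ by Hahn--Banach. Then we glue countably many separating functionals into one strictly positive density by an exhaustion argument.

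\emph{Pointwise separation.} Fix $g\in L^1_+$ with $P(g>0)>0$. Since $K\cap L^1_+=\{0\}$, we have $g\notin K$. Because $K$ is norm-closed and convex and $\{g\}$ is compact, the Hahn--Banach separation theorem in $L^1(P)$, whose dual is $L^\infty(P)$, gives $\zeta\in L^\infty$ and $\alpha\in\R$ with $\E_P[\zeta f]<\alpha<\E_P[\zeta g]$ for all $f\in K$. Since $K$ is a cone containing $0$, we get $\alpha>0$ and $\E_P[\zeta f]\le0$ on $K$. Since $-L^1_+\subseteq K$, testing $f=-\ind_{\{\zeta<0\}}$ shows $\zeta\ge0$ a.s. Normalising to $\|\zeta\|_\infty\le1$, we obtain $\zeta_g\in L^\infty_+$ with $\E_P[\zeta_g f]\le0$ on $K$ and $\E_P[\zeta_g g]>0$. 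In particular, choosing $g=\ind_A$ for any $A$ with $P(A)>0$ gives $P(A\cap\{\zeta_A>0\})>0$.

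\emph{Exhaustion.} Let $\mathcal Z$ be the set of $\zeta\in L^\infty_+$ with $\|\zeta\|_\infty\le1$ and $\E_P[\zeta f]\le0$ for all $f\in K$. This set is convex and closed under countable convex combinations $\sum 2^{-n}\zeta_n$: boundedness gives norm convergence in $L^\infty$ hence in the pairing, so the inequality is preserved by dominated convergence. Put $s=\sup\{P(\zeta>0):\zeta\in\mathcal Z\}$, choose $\zeta_n$ with $P(\zeta_n>0)\to s$, and let $\zeta^*=\sum_n2^{-n}\zeta_n\in\mathcal Z$. Then $\{\zeta^*>0\}=\bigcup_n\{\zeta_n>0\}$, so $P(\zeta^*>0)=s$. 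If $A=\{\zeta^*=0\}$ had positive probability, the first stage would give $\zeta_A\in\mathcal Z$ positive on part of $A$. Then $\tfrac12(\zeta^*+\zeta_A)\in\mathcal Z$ would have strictly larger support probability than $s$, a contradiction. Hence $\zeta^*>0$ a.s., and also $\E_P\zeta^*>0$. Setting $\rho=\zeta^*/\E_P\zeta^*$ gives a bounded, a.s.\ strictly positive density with $\E_P\rho=1$ and $\E_P[\rho f]\le0$ on $K$.

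\emph{Main obstacle.} The delicate step is the exhaustion. It needs closure of $\mathcal Z$ under countable convex sums and the maximality contradiction. Both the uniform bound $\|\zeta\|_\infty\le1$ and the assumption $-L^1_+\subseteq K$, which forces $\zeta\ge0$, are essential there. The Hahn--Banach step itself is routine once we note that the dual of $L^1$ is $L^\infty$ on a probability space.
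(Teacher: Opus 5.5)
Your proposal is correct and follows essentially the same route as the paper: Hahn--Banach separation of each nonzero $x\in L^1_+$ from $K$, nonnegativity of the separator from $-L^1_+\subseteq K$, a countable exhaustion, and the combination $\sum_j 2^{-j}z_j$ followed by normalization. The only difference is cosmetic: the paper maximizes the probability of finite unions of supports of the separators $z_x$, whereas you maximize the support probability over the convex set $\mathcal Z$ of all normalized nonnegative separators, which is closed under countable convex combinations; both are the same Halmos--Savage-type exhaustion argument.
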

\begin{proof}
The proof is the separation argument used in~\cite[Lemma~3]{KS2001}. For each nonzero $x\in L^1_+$, separate $x$ strictly from the closed convex set $K$. The Hahn--Banach separation theorem and $(L^1)'=L^\infty$ give $z_x\in L^\infty$ with
\[
\E[z_xx]>0,\qquad \E[z_xf]\leq0\quad(f\in K).
\]
Indeed, the functional is bounded above on $K$; since $K$ is a cone containing zero, that upper bound forces its value to be nonpositive there. Since $-L^1_+\subseteq K$, testing $-\ind_{\{z_x<0\}}$ shows $z_x\geq0$. Rescale to $0\leq z_x\leq1$.

A countable support selection gives a strictly positive separator. Let $a$ be the supremum of $P(\bigcup_{j=1}^m\{z_{x_j}>0\})$ over finite subfamilies. Choose finite subfamilies with union probabilities tending to $a$, and enumerate their combined, countable family as $(z_j)$. Its support union $A$ has probability $a$: finite partial unions have probability at most $a$, while the selected unions approach $a$. For any $x$, the same finite-union bound, followed by continuity from below, gives $P(A\cup\{z_x>0\})\leq a$. Thus $z_x=0$ almost surely on $A^c$. If $P(A^c)>0$, choosing $x=\ind_{A^c}$ contradicts $\E[z_xx]>0$. Hence $P(A)=1$.

Set $\rho_0=\sum_{j\geq1}2^{-j}z_j$. Then $0<\rho_0\leq1$ almost surely. For $f\in K$, dominated convergence, with dominating variable $|f|$, gives
\[
\E[\rho_0 f]=\lim_{m\to\infty}\sum_{j=1}^m2^{-j}\E[z_jf]\leq0.
\]
Normalize by $\rho=\rho_0/\E\rho_0$. It is bounded, has expectation one, is strictly positive almost surely, and has the desired separating property. Dominated convergence is used because $f$ can have both signs.
\end{proof}

\begin{proof}[Proof of Theorem~\ref{thm:KSgeneral}, (ii)$\Rightarrow$(iii)$\Rightarrow$(iv)]
The first implication is immediate. For the second, first make the prices integrable. With $Z=\sum_{t=0}^T|U_t|$, let
\[
\frac{dP^*}{dP}=\frac{e^{-Z}}{\E_P[e^{-Z}]}.
\]
This density is bounded and strictly positive, and $Ze^{-Z}$ is bounded, so every $U_t$ is $P^*$-integrable. Equivalent probabilities induce the same convergence-in-probability topology: if $P(A_n)\to0$, integrability of $dP^*/dP$ gives $P^*(A_n)\to0$, and the reverse follows by equivalence. Consequently (iii), which concerns null sets and closure in that topology, is unchanged.

In $L^1(P^*)$, set $K=\overline{\mathcal C}\cap L^1(P^*)$. It is a convex cone containing $-L^1_+(P^*)$ and has trivial intersection with $L^1_+(P^*)$. It is norm-closed because $L^1$ convergence implies convergence in probability. Lemma~\ref{lem:separation} gives $Q\sim P^*$ with bounded $dQ/dP^*$ and $\E_Q f\leq0$ for $f\in K$. For bounded $\F_{t-1}$-measurable $H_t$, both signs of $H_t\cdot\Delta U_t$ belong to $K$. Therefore their $Q$-expectations are zero. Testing coordinate vectors times indicators gives $\E_Q[\Delta U_t\mid\F_{t-1}]=0$. The product
\[
\frac{dQ}{dP}=\frac{dQ}{dP^*}\frac{dP^*}{dP}
\]
is bounded, and boundedness of $dQ/dP^*$ preserves integrability of every $U_t$. This proves (iv).
\end{proof}

\subsection{Unbounded predictable positions}
The last implication requires care because integrable price increments multiplied by unbounded positions need not be integrable. The following conditioning argument supplies the missing justification without assuming adapted prices.

\begin{lemma}\label{lem:unbounded}
Suppose $D_t\in L^1(Q;\R^d)$ and $\E_Q[D_t\mid\F_{t-1}]=0$. If finite $\F_{t-1}$-measurable positions $H_t$ satisfy $Z=\sum_{t=1}^T H_t\cdot D_t\geq0$, then $Z=0$ almost surely.
\end{lemma}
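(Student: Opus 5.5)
We prove Lemma~\ref{lem:unbounded} by backward induction on the last date carrying a possibly nonzero position. The idea is to localize the positions on $\F_{t-1}$-measurable events where they are bounded. There the increments become integrable and the conditional mean-zero property applies. The induction hypothesis is stated for all finite horizons and all finite predictable positions, so it can be applied to strategies that stop earlier.

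\emph{Last date.} Let $Z=\sum_{t=1}^T H_t\cdot D_t\geq0$. Put $Z_{T-1}=\sum_{t=1}^{T-1}H_t\cdot D_t$, which need not be integrable, and write $A_n=\{|H_T|\leq n\}\cap\{|Z_{T-1}|\leq n\}$. The first step is to show that $Z_{T-1}\geq0$ almost surely.

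The difficulty is that $Z_{T-1}$ need not be $\F_{T-1}$-measurable, because the $D_t$ are not adapted. So we cannot simply condition on the sign of $Z_{T-1}$. We therefore condition on $\F_{T-1}$ only at the level of the positions. For $m\in\mathbb N$, set $B_m=\{|H_T|\leq m\}\in\F_{T-1}$. Then $\ind_{B_m}H_T\cdot D_T$ is integrable with
\[
\E_Q\bigl[\ind_{B_m}H_T\cdot D_T\mid\F_{T-1}\bigr]=0.
\]

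We then use a truncation. Fix $\F_{T-1}$-measurable bounded $\eta\geq0$ supported in $B_m$. The identity $\ind_{B_m}\eta Z=\ind_{B_m}\eta Z_{T-1}+\eta H_T\cdot D_T$ has its second term integrable with zero mean. Taking generalized conditional expectations given $\F_{T-1}$, which are defined for nonnegative or quasi-integrable variables, gives
\[
0\leq\E_Q[\eta Z\mid\F_{T-1}]=\E_Q[\eta Z_{T-1}\mid\F_{T-1}].
\]
This should be read on the set where $\E_Q[|Z_{T-1}|\mid\F_{T-1}]<\infty$.

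\emph{Earlier dates.} The cleaner route is to proceed from the first date. Define $C_n=\{|H_1|\leq n,\dots,|H_T|\leq n\}$. Since the positions are finite, $C_n\uparrow\Omega$. However, $C_n\in\F_{T-1}$, so it is not predictable for the early dates.

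Instead, use an equivalent change of measure to make the gains integrable. Let $\rho=c\,e^{-\sum_t|H_t|}$ with $c$ normalizing. This density is only $\F_{T-1}$-measurable, and it destroys the martingale property at dates $t<T$.

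The fallback, which I would actually implement, is forward induction with conditional expectations in the extended sense. Show by induction on $s$ that $\E_Q[Z\mid\F_{s}]$ is well-defined in $[0,\infty]$ and equals the extended conditional expectation of $\sum_{t\leq s}H_t\cdot D_t$. At each stage the step $H_{s+1}\cdot D_{s+1}$ is replaced using
\[
\E_Q\bigl[\ind_{\{|H_{s+1}|\leq m\}}H_{s+1}\cdot D_{s+1}\mid\F_s\bigr]=0
\]
on $\{|H_{s+1}|\leq m\}\in\F_s$, with $m\to\infty$. At $s=0$ this yields
\[
\E_Q\bigl[Z\ind_{\{|H_1|\leq m\}}\bigr]=0.
\]
Since $Z\geq0$ and $m$ is arbitrary, it follows that $Z=0$ almost surely.

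\emph{Main obstacle.} The main obstacle is this step: the partial sums are not measurable with respect to the conditioning field, so the usual generalized-martingale argument (Jacod--Shiryaev) must be redone using only the predictability of the positions and the integrability of $D_t$. I would handle the integrability bookkeeping with negative parts, proving $\E_Q[Z^-\mid\cdot]=0$ propagates.
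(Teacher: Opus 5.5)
Your proposal does not yet prove the lemma. The step you yourself call the main obstacle is never carried out, and several claims around it point the wrong way.

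\begin{itemize}
\item The announced first step, $Z_{T-1}=\sum_{t<T}H_t\cdot D_t\geq0$, is false without adaptedness. Take trivial information, $T=2$, $D_1=\xi=-D_2$ with $\xi$ integrable, of mean zero and nonconstant, and $H_1=H_2=1$. Then $Z=0$ while $Z_1=\xi$ changes sign. So no argument that propagates nonnegativity (or vanishing negative parts) along partial sums of the original increments can work.
\item You discard the localization $C_n=\{\max_t|H_t|\leq n\}$ because it lies only in $\F_{T-1}$. That is exactly what is needed when conditioning on $\F_{T-1}$; predictability at earlier dates plays no role at that stage.
\item The fallback cannot proceed by splitting $Z$ term by term under $\E_Q[\,\cdot\mid\F_s]$. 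For $t>s+1$ the position $H_t$ need not be $\F_s$-measurable, and $H_t\cdot D_t$ need not have even a generalized conditional expectation given $\F_s$.
\item Read as a forward induction, the fallback's base case at $s=0$ is already the conclusion.
\end{itemize}

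The missing idea, which the paper uses, is to condition once on $\F_{T-1}$, where all positions are measurable. On $A_m=\{\max_{t\leq T}|H_t|\leq m\}\in\F_{T-1}$ everything is integrable and the $t=T$ term vanishes. Letting $m\to\infty$ gives
\[
0\leq\E_Q[Z\mid\F_{T-1}]=\sum_{t=1}^{T-1}H_t\cdot D'_t,\qquad D'_t=\E_Q[D_t\mid\F_{T-1}].
\]
The $D'_t$ are integrable and, by the tower property, satisfy $\E_Q[D'_t\mid\F_{t-1}]=0$. So the lemma itself with horizon $T-1$ serves as the induction hypothesis. It forces $\E_Q[Z\mid\F_{T-1}]=0$, hence $\E_QZ=0$ and $Z=0$.

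Your truncated identity $0\leq\E_Q[\eta Z\mid\F_{T-1}]=\E_Q[\eta Z_{T-1}\mid\F_{T-1}]$ was one observation away from this. The restriction to $\{\E_Q[|Z_{T-1}|\mid\F_{T-1}]<\infty\}$ holds almost surely, because $|Z_{T-1}|\leq\sum_{t<T}|H_t||D_t|$ with $\F_{T-1}$-measurable $H_t$. The right-hand side is then exactly $\eta\sum_{t<T}H_t\cdot D'_t$, a gain of the lemma's own form.

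Unrolled, the induction is a descending chain $s=T-1,\dots,0$. At each step the nonnegative variable $\E_Q[Z\mid\F_{s+1}]$, not $Z$ split into summands, is conditioned on $\F_s$ after localizing on $\{\max_{t\leq s+1}|H_t|\leq m\}\in\F_s$. Your target identity $\E_Q[Z\mid\F_s]=\sum_{t\leq s}H_t\cdot\E_Q[D_t\mid\F_s]$ is correct, but it is this nested backward route that proves it.
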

\begin{proof}
Use induction on $T$. Since all $H_t$ are $\F_{T-1}$-measurable, the conditional expectation of the nonnegative $Z$ is well defined and
\begin{equation}\label{eq:localizedcondition}
\E_Q[Z\mid\F_{T-1}]
=\sum_{t=1}^{T-1}H_t\cdot\E_Q[D_t\mid\F_{T-1}].
\end{equation}
To justify this identity, restrict first to the events $A_m=\{\max_{1\leq t\leq T}|H_t|\leq m\}\in\F_{T-1}$. The variable $\ind_{A_m}Z$ is integrable, all products can be conditioned, and the term for $t=T$ vanishes. These increasing events cover $\Omega$ up to a null set, so the localized identities imply~\eqref{eq:localizedcondition}; in particular its right-hand side is finite almost surely.

For $T=1$ the sum is empty and the conditional expectation is zero, forcing $Z=0$. For $T>1$, set $D'_t=\E_Q[D_t\mid\F_{T-1}]$, $1\leq t<T$. These variables are integrable and satisfy $\E_Q[D'_t\mid\F_{t-1}]=0$ by the tower property. The right-hand side of~\eqref{eq:localizedcondition} is nonnegative. The induction hypothesis applied to $(D'_t)_{t<T}$ makes it zero. Thus $\E_Q Z=0$ in the nonnegative extended sense, and $Z=0$ almost surely.
\end{proof}

\begin{proof}[Proof of Theorem~\ref{thm:KSgeneral}, (iv)$\Rightarrow$(i)]
If $0\leq f=\sum_tH_t\cdot\Delta U_t-r$ with $r\geq0$, the sum is nonnegative. Lemma~\ref{lem:unbounded} makes it zero $Q$-almost surely, hence $P$-almost surely. It follows that $f=0$. This finishes the proof of all four equivalences.
\end{proof}

\subsection{The algebraic extension argument}\label{app:HB}
The classical real Hahn--Banach extension theorem follows from Zorn's lemma. The separation and $L^p$ duality results invoked above are the standard consequences and companion results recorded in~\cite{Rudin1991}.

\begin{theorem}[Real Hahn--Banach extension]
Let $M$ be a linear subspace of a real vector space $E$, let $p:E\to\R$ be sublinear, and let $f:M\to\R$ be linear with $f\leq p$ on $M$. There is a linear extension $\Lambda:E\to\R$ with $-p(-x)\leq\Lambda(x)\leq p(x)$ for every $x\in E$.
\end{theorem}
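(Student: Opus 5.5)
We extend a linear functional from $M$ to all of $E$ while keeping it dominated by $p$. The plan has two steps: a one-dimensional extension step, followed by Zorn's lemma to pass to a maximal extension. The lower bound $-p(-x)\leq\Lambda(x)$ is then automatic. Applying $\Lambda(-x)\leq p(-x)$ and using linearity gives $-\Lambda(x)\leq p(-x)$, which is the required inequality.

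\emph{One-step extension.} Let $(N,g)$ be a linear subspace $N\supseteq M$ with a linear $g$ extending $f$ and satisfying $g\leq p$ on $N$, and take $x_0\notin N$. Every element of $N\oplus\R x_0$ is uniquely $y+\lambda x_0$, so we set $\tilde g(y+\lambda x_0)=g(y)+\lambda\alpha$ for a constant $\alpha$ still to be chosen. For $y,y'\in N$, sublinearity gives
\[
 g(y)+g(y')=g(y+y')\leq p(y+y')\leq p(y-x_0)+p(y'+x_0).
\]
Hence
\[
 \sup_{y\in N}\bigl(g(y)-p(y-x_0)\bigr)\leq\inf_{y'\in N}\bigl(p(y'+x_0)-g(y')\bigr),
\]
and both sides are finite because each is bounded by the other evaluated at a single point. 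We choose $\alpha$ between them. For $\lambda>0$, dividing by $\lambda$ and using positive homogeneity of $p$ reduces $\tilde g\leq p$ to the right-hand inequality applied at $y/\lambda$. For $\lambda<0$, the same reduction uses the left-hand inequality applied at $y/|\lambda|$. The case $\lambda=0$ is the hypothesis on $g$. This inequality bookkeeping is the main place where care is needed, although it is routine.

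\emph{Zorn's lemma.} Let $\mathcal P$ be the set of pairs $(N,g)$ as above, partially ordered by extension. It is nonempty because $(M,f)\in\mathcal P$. For a chain, the union of the subspaces is a subspace, since any two members are comparable. The union of the functionals is well defined, linear and dominated by $p$, because any two points lie in a common member of the chain; so it is an upper bound. Zorn's lemma gives a maximal element $(N,\Lambda)$. If $N\ne E$, the one-step extension produces a strictly larger element of $\mathcal P$, contradicting maximality. Thus $N=E$, and $\Lambda$ is the required extension with $\Lambda\leq p$, which also yields the two-sided bound noted above.
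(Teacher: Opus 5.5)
Your proposal is correct and follows the paper's proof essentially step for step. It uses the same one-dimensional extension, choosing the constant between the supremum and infimum justified by $g(y)+g(y')\leq p(y-x_0)+p(y'+x_0)$. It then applies the same Zorn's lemma argument over dominated extension pairs and obtains the lower bound from $\Lambda(-x)\leq p(-x)$ and linearity.
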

\begin{proof}
First choose $x_0\notin M$ and seek an extension to $M+\R x_0$ of the form $\Lambda_r(m+\lambda x_0)=f(m)+\lambda r$. For $\lambda=0$ domination is given. For positive $\lambda$, homogeneity reduces it to $r\leq p(v+x_0)-f(v)$ for all $v\in M$. For negative $\lambda$, it reduces to $r\geq f(w)-p(w-x_0)$ for all $w\in M$. Such an $r$ exists because
\[
f(v)+f(w)=f(v+w)\leq p(v+w)
\leq p(v+x_0)+p(w-x_0).
\]
Thus every lower bound is at most every upper bound. Both families contain finite numbers, so
\[
\sup_{w\in M}\{f(w)-p(w-x_0)\}
\leq r\leq
\inf_{v\in M}\{p(v+x_0)-f(v)\}
\]
has a real solution, giving the one-dimensional extension.

Partially order pairs $(V,g)$ consisting of a subspace $M\subseteq V\subseteq E$ and an algebraic linear extension $g$ of $f$, dominated by $p$, by extension of both domain and function. The family is nonempty because it contains $(M,f)$. A chain has upper bound with domain the union of its subspaces; the functions agree on intersections by the chain property, so they define a linear dominated function on this union. Zorn's lemma gives a maximal pair $(V,g)$. If $V\ne E$, the one-dimensional construction extends it further, a contradiction. Hence $\Lambda=g$ on $E$. Finally $\Lambda(-x)\leq p(-x)$ and linearity give the lower bound. No continuity assumption is made at this algebraic stage.
\end{proof}

%% file: supplement.tex
\section{Further projection arguments and examples}\label{app:supplement}
\subsection{Conditions for a common projection}
A singleton $\M_S$ satisfies~\eqref{eq:invariance} automatically. More generally, suppose the information filtration is constant, $\F_t=\mathcal A$, and $S_0$ is deterministic. For any optional martingale measure $Q$, $X_t^Q$ is $\mathcal A$-measurable and its martingale property gives
\[
X_t^Q=\E_Q[X_t^Q\mid\mathcal A]=X_0^Q=S_0.
\]
Thus every projection is the same constant process, even when $\mathcal A$ is nontrivial. The trivial-information trinomial model of Section~\ref{sec:claimfield} is a special case.

For a delayed filtration $\F_t=\mathcal D_{(t-k)^+}$, where $\mathbb D$ adapts discounted prices, suppose additionally that every $Q\in\M_S$ is a martingale measure for $\sbar$ in $\mathbb D$. Then
\[
X_t^Q=\E_Q[\sbar_t\mid\mathcal D_{(t-k)^+}]=\sbar_{(t-k)^+}
\]
by the martingale property. This proves invariance under that additional assumption. The example in Section~\ref{sec:delayinvariance} shows why membership in $\M_S$ alone cannot replace it. Adaptedness itself gives $X^Q=\sbar$, but does not imply that the natural price filtration is the entire trading filtration or that its terminal sigma-field equals the ambient one.

\subsection{Arbitrage-free price processes}\label{app:pricing}
The value of a projected replicating portfolio and an arbitrage-free price obtained by adding a claim as a traded asset refer to different market constructions. Their relation requires an explicit comparison of the corresponding gains and martingale measures.

\begin{proposition}\label{prop:extensionpricing}
In the finite setting of Section~\ref{sec:setup}, let $\xi\geq0$ be $\F_T$-measurable and let $Q_0\in\M_S$. The adapted process
\begin{equation}\label{eq:extensionprice}
\pi_t=B_t\E_{Q_0}[\xi/B_T\mid\F_t]
\end{equation}
is nonnegative, has $\pi_T=\xi$, and its addition to the original gain market preserves absence of arbitrage. If $\xi$ is attainable in the fixed projected market, this is the unique arbitrage-free adapted price process in that projected market and equals every replicating portfolio's value. If, in addition,~\eqref{eq:invariance} holds, it is also the unique arbitrage-free adapted price process for the claim added to the original gain market.
\end{proposition}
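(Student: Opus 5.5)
We establish the three assertions of Proposition~\ref{prop:extensionpricing} in order: the basic properties of $\pi$, uniqueness in the projected market, and uniqueness in the original market.

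\emph{Step 1: properties of $\pi$.} Nonnegativity follows from $\xi\geq0$ and positivity of $B$. Adaptedness is immediate, and $\pi_T=\xi$ because $\xi$ is $\F_T$-measurable. For absence of arbitrage after adding the claim, the discounted process $\pi_t/B_t=\E_{Q_0}[\xi/B_T\mid\F_t]$ is an $(\mathbb F,Q_0)$-martingale. Hence every gain $(G\cdot(\pi/B))_T$ with $G$ predictable has $Q_0$-expectation zero. The augmented gain space is $K_S$ plus such gains, so by~\eqref{eq:annihilatororiginal} the measure $Q_0$ annihilates all augmented gains. A nonnegative augmented gain therefore has zero $Q_0$-expectation, and full support of $Q_0$ forces it to vanish. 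This argument is the easy direction of~\eqref{eq:firstftap} applied to the extended market, with the adapted component $\pi/B$ equal to its own projection.

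\emph{Step 2: uniqueness in the projected market.} Suppose $\xi/B_T=c+(H\cdot X)_T$. Since $Q_0\in\N_X$, Theorem~\ref{thm:fixed} gives that every replicating wealth equals $\E_{Q_0}[\xi/B_T\mid\F_t]=\pi_t/B_t$. Now let $\tilde\pi$ be any adapted price process with $\tilde\pi_T=\xi$ such that the market $(B,BX,\tilde\pi)$ is arbitrage-free. Suppose $D=\tilde\pi_t/B_t-W_t\ne0$ at some first date $t$. Then $D$ is $\F_t$-measurable, and on $A=\{D>0\}\in\F_t$ (or on $\{D<0\}$) we build an arbitrage. From time $t$ on this event, sell the claim, hold the replicating strategy, and invest the difference $D$ in cash, all completed by the self-financing cash rule (Proposition~\ref{prop:portfolioidentities}, valid since every price is adapted). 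The terminal discounted wealth is $\ind_A D\geq0$, which is nonzero. Equivalently, the no-arbitrage argument for the adapted market $(X,\tilde\pi/B)$ from the classical theorem gives a measure $R$ making both processes martingales. Then $R\in\N_X$, and~\eqref{eq:value} forces $\tilde\pi_t/B_t=\E_R[\xi/B_T\mid\F_t]=W_t$. We prefer this measure route because it is cleaner.

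\emph{Step 3: uniqueness in the original market under~\eqref{eq:invariance}.} Let $\tilde\pi$ be adapted with $\tilde\pi_T=\xi$, and suppose the extended gain market $(\sbar,\tilde\pi/B)$ is arbitrage-free. By~\eqref{eq:firstftap} for the extended vector of prices, there is $Q\sim P$ satisfying the moment condition~\eqref{eq:momentcondition} for all components. The $\sbar$-components give $Q\in\M_S$. The adapted $\tilde\pi/B$-component gives $\tilde\pi_t/B_t=\E_Q[\xi/B_T\mid\F_t]$, since that component is a $Q$-martingale ending at $\xi/B_T$. By~\eqref{eq:inclusion}, $Q\in\N_X$, so by~\eqref{eq:value} this equals $W_t=\pi_t/B_t$.

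\emph{Main obstacle.} The main subtlety is in Step 3. We must confirm that the finite-state first fundamental theorem~\eqref{eq:firstftap} applies to the augmented nonadapted price vector $(\sbar,\tilde\pi/B)$ with the gain convention~\eqref{eq:originalgains}. It does, since~\eqref{eq:firstftap} was stated for general finite-valued, possibly nonadapted prices. The other point to handle carefully is that adaptedness of $\tilde\pi$ is what makes its projection equal to itself.
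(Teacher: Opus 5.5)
Your proposal is correct and follows the paper's proof essentially step for step. Step~1 is the easy direction of the first fundamental theorem, where you make explicit that $Q_0$ annihilates the enlarged gain space and has full support. Steps~2 and~3 take a martingale measure of the enlarged market, place it in $\N_X$ (directly in the projected case, via~\eqref{eq:inclusion} under~\eqref{eq:invariance}), and conclude with~\eqref{eq:value}. As in the paper, the identity $\pi_t/B_t=W_t$ uses $Q_0\in\N_X$, i.e.\ that the $Q_0$ in the statement is the measure defining $X$; your direct arbitrage construction in Step~2 is also valid but not needed.
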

\begin{proof}
Conditional expectation and $B_t>0$ give nonnegativity and the terminal value. The process $\pi/B$ is an $(\mathbb F,Q_0)$-martingale. Its own optional projection is itself. Since $Q_0\in\M_S$, the vector formed by the original projected prices and $\pi/B$ is a $Q_0$-martingale. The first fundamental theorem applied to the enlarged gain space proves absence of arbitrage.

If $\xi$ is attainable in the fixed projected market, Theorem~\ref{thm:fixed} yields $V_t=B_t\E_R[\xi/B_T\mid\F_t]$ for every $R\in\N_X$ and for every replicating strategy. This proves equality of their values and independence from the chosen measure. If $\widehat\pi$ is another adapted price process with terminal value $\xi$ whose addition to that projected market is arbitrage-free, its enlarged market has a martingale measure $R\in\N_X$. Hence
\[
\widehat\pi_t/B_t=\E_R[\widehat\pi_T/B_T\mid\F_t]
=\E_R[\xi/B_T\mid\F_t]=V_t/B_t.
\]
This proves uniqueness there without transferring a projected hedge to the original asset.

Under~\eqref{eq:invariance}, every $Q\in\M_S$ belongs to $\N_X$. If addition of $\widehat\pi$ to the original market is arbitrage-free, an optional martingale measure $Q$ of the enlarged market belongs to $\M_S$. Since $\widehat\pi$ is adapted, $\widehat\pi/B$ is a $Q$-martingale. Apply the same terminal conditioning identity and Theorem~\ref{thm:fixed} to obtain $\widehat\pi=V=\pi$. Without the common-projection assumption, Section~\ref{sec:dependence} supplies distinct original-market prices $2/3$ and $3/4$ for the same claim. Thus fixed-measure projected attainability alone does not imply the last uniqueness assertion.
\end{proof}

\subsection{A two-state information asymmetry}\label{app:asymmetry}
Let $T=2$, $B\equiv1$, and give both states positive physical probability. Set
\[
\begin{array}{c|ccc}
&S_0&S_1&S_2\\\hline
u&100&120&100\\
d&100&80&100
\end{array}
\qquad
\F_0=\F_1=\{\varnothing,\Omega\},\quad\F_2=2^\Omega.
\]
An optional martingale measure must satisfy $120q+80(1-q)=100$, so it is uniquely $Q(u)=Q(d)=1/2$. The projected process is identically 100. Its natural terminal sigma-field is trivial, so the projected market is complete for its price-history claims; it is not complete for all $\F_2$-claims.

For the original prices, if both cash and risky positions must be predictable, $\beta_2$ and $H_2$ are deterministic by Lemma~\ref{lem:trivial}. Every terminal value is then the constant $\beta_2+100H_2$, so $C=\ind_{\{u\}}$ cannot be replicated. The self-financing equation at time one gives
\[
\beta_1+120H_1=\beta_2+120H_2,\qquad
\beta_1+80H_1=\beta_2+80H_2.
\]
Subtracting shows $H_1=H_2$ and then $\beta_1=\beta_2$. In contrast, under the gain convention, $c=1/2$, $H_1=1/40$, $H_2=0$ gives
\[
c+(H\cdot S)_2=\frac12+\frac{S_1-100}{40}=C.
\]
Its second cash position equals $C$ and is unobservable at time one. Nonattainability of this claim therefore depends on the stricter portfolio convention.

Now give a second investor the natural price filtration $\mathbb D$. At time one that investor can use
\[
H_1=\beta_1=0,\qquad H_2=-\ind_{\{u\}}+\ind_{\{d\}},\qquad
\beta_2=120\ind_{\{u\}}-80\ind_{\{d\}}.
\]
The strategy is $\mathbb D$-predictable. Its cost at time one is zero in both states and its terminal value is 20 in both states, so it is a self-financing arbitrage. Equivalently, no full-information martingale measure exists because $\E_Q[S_2\mid\mathcal D_1]=100\ne S_1$. The delayed investor's original gain market has the optional martingale measure above and is arbitrage-free. The example establishes a dependence on information in this model; it is not an empirical claim about high-frequency trading.

\subsection{The four-period information example}\label{app:fourperiod}
Take $T=4$, $k=1$, $S_0=1$, and write $[z]=\{\omega\in\{u,d\}^4:\omega\text{ starts with }z\}$ for a cylinder set. The delayed sigma-fields are
\[
\begin{aligned}
\F_0^1=\F_1^1&=\{\varnothing,\Omega\},\\
\F_2^1&=\sigma([u],[d]),\\
\F_3^1&=\sigma([uu],[ud],[du],[dd]),\\
\F_4^1&=\sigma([uuu],[uud],[udu],[udd],[duu],[dud],[ddu],[ddd]).
\end{aligned}
\]
Each sigma-field consists of \emph{all unions} of the displayed atoms. In particular their numbers of atoms are $1,1,2,4,8$, and their numbers of events are $2,2,4,16,256$. The price process $S$ itself is not adapted to this delayed filtration in the nondegenerate model. At time three the information is $\sigma(S_0,S_1,S_2)$, not merely $\sigma(S_2)$: the histories $ud$ and $du$ have the same current price $ud$ but different observed price histories. Under the product martingale measure the discounted projection is
\[
(X_0,X_1,X_2,X_3,X_4)
=(1,1,\sbar_1,\sbar_2,\sbar_3).
\]
It is adapted to exactly the delayed history filtration displayed above. These sigma-fields specify the history available to the investor at each date.

\subsection{The strategy-shift proof in the binomial model}\label{app:shift}
An alternative proof of Proposition~\ref{prop:binomial} uses a full-information replicating portfolio and shifts its trading times. It complements the explicit construction in~\eqref{eq:backward}--\eqref{eq:shiftedstrategy}.

\begin{proof}[Alternative proof of projected binomial completeness]
Let $n=(T-k)^+$ and $h=\xi/B_T$, with $\xi$ nonnegative and $\mathcal D_n$-measurable. If $n=0$, take $H=0$ and $\beta_t=h$ at every date. Suppose $n\geq1$. Classical binomial completeness gives a full-information self-financing portfolio $(\widehat\beta_j,\widehat H_j)_{j=1}^T$ with terminal discounted wealth $h$; this also follows from the backward construction in the main proof with horizon $T$. Under the product measure $Q_0$ its discounted wealth $\widehat W$ is a martingale. Therefore
\begin{equation}\label{eq:earlyknownclaim}
\widehat W_n=\E_{Q_0}[h\mid\mathcal D_n]=h.
\end{equation}
Set $(\widehat\beta_0,\widehat H_0)=(\widehat\beta_1,\widehat H_1)$ and define
\[
(\beta_t,H_t)=(\widehat\beta_{(t-k)^+},\widehat H_{(t-k)^+}),\qquad 1\leq t\leq T.
\]
For $t\leq k$, these are deterministic initial holdings. For $t>k$, their measurability is $\mathcal D_{t-k-1}=\F^k_{t-1}$. Thus the full portfolio is predictable.

Until the first observed movement, both $X$ and the holdings are constant. At the transition from index zero to index one the holdings agree by definition. For a later rebalancing with $j=t-k\geq1$, ordinary discounted self-financing gives
\[
\beta_t+H_tX_t
=\widehat\beta_j+\widehat H_j\sbar_j
=\widehat\beta_{j+1}+\widehat H_{j+1}\sbar_j
=\beta_{t+1}+H_{t+1}X_t.
\]
It is therefore self-financing in the projected market. Finally $X_T=\sbar_n$, so~\eqref{eq:earlyknownclaim} implies
\[
B_T(\beta_T+H_TX_T)
=B_T(\widehat\beta_n+\widehat H_n\sbar_n)
=B_T\widehat W_n=\xi.
\]
This includes $k=0$. The cash holdings are units of the bank account; the projected risky price is $B_t\sbar_{(t-k)^+}$, not the unscaled old price $S_{(t-k)^+}$ when interest is nonzero.
\end{proof}

Under the product measure, the conditional expectations can also be calculated directly. With $m=(t-k)^+$,
\[
\E_{Q_0}[\sbar_t\mid\mathcal D_m]
=\sbar_m\prod_{i=m+1}^t\E_{Q_0}[Y_i/b]
=\sbar_m.
\]
For $t<k$, $m=0$ and this is $S_0$. For $t\geq k$,
\[
\E_{Q_0}[\sbar_{t+1}\mid\mathcal D_{t-k}]
=\sbar_{t-k}\prod_{i=t-k+1}^{t+1}\E_{Q_0}[Y_i/b]
=\sbar_{t-k}
=\E_{Q_0}[\sbar_t\mid\mathcal D_{t-k}].
\]
The first product on this last line has $k+1$ factors, each equal to one. Independence is used under $Q_0$. In the nondegenerate full-support binomial model the inequality $d<b<u$ is also necessary for absence of original gain arbitrage: if $b\leq d$, a long position in the first period has a nonnegative discounted increment and a strictly positive one on an up state; if $b\geq u$, a short position has the corresponding property on a down state. These deterministic first-period positions remain available under any delay.

\subsection{Two examples behind the first fundamental theorem}
The following examples are due to Kabanov and Stricker~\cite[Sections~2 and~4]{KS2006}.

Let $T=2$, $B\equiv1$, $\F_0=\F_1$ trivial, and $0<P(A)<1$. Define
\[
\Delta S_1=\ind_A-\tfrac12\ind_{A^c},\qquad
\Delta S_2=-\tfrac12\ind_A+\ind_{A^c}.
\]
Take $S_0=1$, so the prices are positive. At either individual date every nonzero admissible position is constant and its gain takes both signs. There is no one-period arbitrage. However, $H_1=H_2=1$ gives
\[
(H\cdot S)_2=(1-\tfrac12)\ind_A+(-\tfrac12+1)\ind_{A^c}=\tfrac12.
\]
This is a two-period arbitrage and explains the failure of Proposition~\ref{prop:onestepNA} without adaptedness.

For the second example, take independent increments $D_1,D_2$, each uniform on $[-1,3]$, under the physical measure $P$, and a trivial filtration at all three dates. Let $S_0=3$, $S_t=S_0+\sum_{i=1}^tD_i$, so prices stay positive. An admissible strategy is a constant pair $(h_1,h_2)$. If it is nonzero, the linear form $h_1x_1+h_2x_2$ takes both signs on open subsets of $(-1,3)^2$: for example, sufficiently small positive multiples of $(h_1,h_2)$ and its negative both lie in that square. Independence and positive density give each sign positive probability. Hence the original gain model is arbitrage-free. Its projection under $P$, however, is
\[
\E_P[S_t\mid\F_t]=3+t,
\]
because $\E_PD_i=1$. The constant strategy $(1,1)$ makes a sure gain of 2 in this projected market. A projection under an arbitrary physical measure need not preserve absence of arbitrage. The main paper instead chooses $Q_0\in\M_S$.

\subsection{The continuous-time illustration and its boundary}
Kabanov and Stricker~\cite[Section~5]{KS2006} also give a continuous-time example. Here a Brownian motion $W$ must be defined for all $t\geq0$, since $e^t$ can exceed a prescribed finite trading horizon. Put $\phi(x)=\pi+\arctan x$ and
\[
Y_t=\int_0^t\phi(W_s)\,dW_s+\int_0^t W_{e^s}\,ds.
\]
The second summand is a random, anticipative finite-variation process. Kabanov and Stricker establish that $Y$ cannot be a semimartingale in a filtration satisfying the usual conditions that adapts it. Their argument recovers $W$ from the quadratic variation $\int_0^t\phi(W_s)^2ds$ and then recovers future Brownian values from the finite-variation part. This semimartingale assertion is taken from that source, not from the discrete-time theorem.

For trivial information, expected gains can be computed directly. For a bounded deterministic Borel function $a$ and a finite horizon $L$, define
\[
(a\cdot Y)_L=\int_0^L a(s)\phi(W_s)\,dW_s+\int_0^L a(s)W_{e^s}\,ds.
\]
The first term is integrable and has expectation zero by the square-integrable It\^o integral: $\phi$ and $a$ are bounded. For the second,
\[
\E\int_0^L|a(s)W_{e^s}|\,ds
\leq\|a\|_\infty\int_0^L e^{s/2}\,ds<\infty,
\]
using $\E|W_t|\leq\sqrt{\E W_t^2}=\sqrt t$. Fubini's theorem and $\E W_{e^s}=0$ give zero expectation. Thus $\E[(a\cdot Y)_L]=0$, and a nonnegative terminal gain in this specified class must vanish almost surely. The projection onto the completed trivial filtration is the zero martingale. This calculation concerns bounded deterministic integrands; it neither invokes Theorem~\ref{thm:KSgeneral} in continuous time nor establishes a completeness result there.

%% file: computation.tex
\section{A mathematical procedure for the delayed hedge}\label{app:computation}
This appendix specifies the construction of Proposition~\ref{prop:binomial} entirely in terms of finite histories and arithmetic operations. It applies to path-dependent claims, including claims that distinguish histories with the same current price.

\subsection{Inputs and terminal values}
Fix $T\geq1$, $k\geq0$, $S_0>0$, and $0<d<b<u$. Set $n=(T-k)^+$ and $q=(b-d)/(u-d)$. Write $\mathcal Z_j=\{d,u\}^j$, with $\mathcal Z_0=\{\varnothing\}$. If the undiscounted payoff $\xi$ is initially given on full paths of length $T$, first require
\[
\xi(\omega)=\xi(\omega')\quad\text{whenever}\quad
(\omega_1,\ldots,\omega_n)=(\omega'_1,\ldots,\omega'_n).
\]
This is exactly $\F_T^k$-measurability, since these prefixes are its atoms. Denote the common payoff on a prefix $z\in\mathcal Z_n$ by $\xi(z)$, and put
\[
v_n(z)=\frac{\xi(z)}{b^T},\qquad
x_j(z)=S_0\prod_{i=1}^j\frac{z_i}{b},\qquad x_0(\varnothing)=S_0.
\]
The exponent in terminal discounting is $T$, including when $n<T$. If $n=0$, the payoff is constant: take $c=\xi/b^T$, $H_t=0$, and $\beta_t=c$ at every trading date. The remainder concerns $n\geq1$.

\subsection{Backward recursion and calendar-time holdings}
For $j=n,n-1,\ldots,1$ and each $z\in\mathcal Z_{j-1}$, calculate
\begin{align}
v_{j-1}(z)&=qv_j(z,u)+(1-q)v_j(z,d),\label{eq:procedurevalue}\\
a_j(z)&=\frac{b\bigl(v_j(z,u)-v_j(z,d)\bigr)}{x_{j-1}(z)(u-d)},\label{eq:proceduredelta}\\
\gamma_j(z)&=v_{j-1}(z)-a_j(z)x_{j-1}(z).
\label{eq:procedurecash}
\end{align}
All denominators are strictly positive. The two identities
\begin{equation}\label{eq:proceduresuccessors}
\gamma_j(z)+a_j(z)x_{j-1}(z)\frac{y}{b}=v_j(z,y),
\qquad y\in\{d,u\},
\end{equation}
follow from $qu+(1-q)d=b$. They determine both holdings from the two successor values.

Take initial capital $c=v_0(\varnothing)$. Before the first observed price movement, use $H_t=0$ and $\beta_t=c$ for $t\leq k$. At a later trading date $t=k+j\leq T$, the available history is $z=(Y_1,\ldots,Y_{j-1})$; take
\[
H_{k+j}=a_j(z),\qquad \beta_{k+j}=\gamma_j(z).
\]
Both holdings are $\F_{k+j-1}^k$-measurable. At an effective node $z$ of length $j-1$, their discounted cost is $v_{j-1}(z)$ by~\eqref{eq:procedurecash}. After either successor, their discounted value is $v_j(z,y)$ by~\eqref{eq:proceduresuccessors}, exactly the cost of the next holdings. This proves self-financing at every rebalance, including the first purchase after the initial cash-only dates. Induction over $j$ therefore gives
\[
W_t=v_{(t-k)^+}(Y_1,\ldots,Y_{(t-k)^+}),\qquad
V_t=b^tW_t,\qquad V_T=\xi.
\]
For nonnegative payoffs, the recursion also gives $W_t\geq0$ at every date. Shares and cash holdings themselves may be negative, consistently with the unconstrained trading model.

\subsection{A check with nonzero interest}
Take $T=2$, $k=1$, $S_0=100$, $u=6/5$, $d=4/5$, $b=11/10$, and $\xi=\ind_{\{Y_1=u\}}$. Then $n=1$, $q=3/4$, and
\[
v_1(u)=\frac{100}{121},\quad v_1(d)=0,\quad
c=\frac{75}{121},\quad
H_1=0,\quad\beta_1=\frac{75}{121},\quad
H_2=\frac1{44},\quad\beta_2=-\frac{200}{121}.
\]
The rebalance at date one costs $\beta_2+100H_2=75/121$ in discounted units, equal to the available wealth. At date two,
\[
X_2=\begin{cases}1200/11,&Y_1=u,\\800/11,&Y_1=d,\end{cases}
\qquad
\beta_2+H_2X_2=\begin{cases}100/121,&Y_1=u,\\0,&Y_1=d.\end{cases}
\]
Multiplication by $B_2=121/100$ gives exactly $\xi$ on all four full paths. Neither the positions nor the payoff use the unobserved second return. This calculation also checks that the auxiliary risky price at maturity is $B_2X_2$, not $S_1$ when $b\ne1$.